\newtheorem{proof}{Proof}
\begin{document}

\title{A Federated Reinforcement Learning Method with Quantization for Cooperative Edge Caching in Fog Radio Access Networks}

\author{Yanxiang~Jiang,~\IEEEmembership{Senior~Member,~IEEE},
	Min~Zhang, ~Fu-Chun~Zheng,~\IEEEmembership{Senior~Member,~IEEE},
    Yan~Chen, ~\IEEEmembership{Senior~Member,~IEEE},
	~Mehdi~Bennis,~\IEEEmembership{Fellow,~IEEE}, and Xiaohu~You,~\IEEEmembership{Fellow,~IEEE}
\thanks{{Manuscript received \today. This work was supported in part by the National Key Research and Development Program under Grant 2021YFB2900300, the National Natural Science Foundation of China under grant 61971129, and the Shenzhen Science and Technology Program under Grant KQTD20190929172545139.}}
	\thanks{{Y. Jiang is with the National Mobile Communications Research Laboratory, Southeast University, Nanjing 210096, China, and the School of Electronic and Information Engineering, Harbin Institute of Technology, Shenzhen 518055, China (e-mail: yxjiang@seu.edu.cn).}}
	\thanks{{M. Zhang and X. You are with the National Mobile Communications Research Laboratory, Southeast University, Nanjing 210096, China (e-mail: mzhing@seu.edu.cn, xhyu@seu.edu.cn).}}
	\thanks{{F. Zheng is with the School of Electronic and Information Engineering, Harbin Institute of Technology, Shenzhen 518055, China, and the National Mobile Communications Research Laboratory, Southeast University, Nanjing 210096, China (e-mail: fzheng@ieee.org).}}
    \thanks{{Y. Chen is with the School of Cyber Science and Technology, University of Science and Technology of China, and Key Laboratory of Cyberspace Cultural Content Cognition, Communication and Detection, Ministry of Culture and Tourism (e-mail: eecyan@ustc.edu.cn).}}
	\thanks{{M. Bennis is with the Centre for Wirelesss Communications, University of Oulu, Oulu 90014, Finland (e-mail: mehdi.bennis@oulu.fi).}}
}

\maketitle

\begin{abstract}
In this paper, cooperative edge caching problem is studied in fog radio access networks (F-RANs). Given the non-deterministic polynomial hard (NP-hard) property of the problem, a dueling deep Q network (Dueling DQN) based caching update algorithm is proposed to make an optimal caching decision by learning the dynamic network environment. In order to protect user data privacy and solve the problem of slow convergence of the single deep reinforcement learning (DRL) model training, we propose a federated reinforcement learning method with quantization (FRLQ) to implement cooperative training of models from multiple fog access points (F-APs) in F-RANs. To address the excessive consumption of communications resources caused by model transmission, we prune and quantize the shared DRL models to reduce the number of  model transfer parameters. The communications interval is increased and the communications rounds are reduced by periodical model global aggregation. We analyze the global convergence and computational complexity of our policy. Simulation results verify that our policy has better performance in reducing user request delay and improving cache hit rate compared to benchmark schemes. The proposed policy is also shown to have faster training speed and higher communications efficiency with minimal loss of model accuracy.

\end{abstract}

\begin{IEEEkeywords}
Fog radio access networks, cooperative edge caching, dueling DQN, deep reinforcement learning, federated learning, quantization.
\end{IEEEkeywords}

\section{Introduction}
Currently, mobile data traffic is exploding due to the widespread popularity of mobile services, social networks and resource-intensive applications \cite{1Towards,2RecentAdvances,3Integration}. Various mobile device applications, such as Internet of Things (IoT), Internet of Vehicles (IoV), virtual reality and augmented reality, require higher network throughput and tighter network delay \cite{4AComprehensiveSurvey}. The traditional base station (BS)-centric network architecture can no longer meet these demands, so a new network architecture is needed \cite{5AComprehensiveSurvey}. Fog radio access network (F-RAN) is a promising solution to the congestion problem of cellular network communications. In F-RANs, fog access points (F-APs) are located at the edge of the networks and therefore closer to users than the cloud server \cite{6CooperativeEdgeCaching}. Edge caching technology allows F-APs to store popular contents in the networks \cite{7ContentPopularityPrediction}. Therefore, F-APs can effectively reduce both the load pressure on backhaul links and content delivery delay. Due to the limited communications resources and local storage capacity of F-APs, however, how to cache the most popular contents has become the main challenge of edge caching \cite{8SurveyofFog}.

Several research works have been published on content placement for edge caching. In\cite{9CooperativeEdgeCaching}, by considering the popularity and location factors of contents, a cooperative caching scheme was proposed to cooperatively transmit contents to users in vehicular networks. In \cite{10CostOrientedMobilityAware}, an iterative cache layout algorithm with convex approximation was developed to solve the cache layout problem in device-to-device (D2D) networks.
In \cite{11Gametheoreticapproaches} and \cite{13AMeanField}, game theory was implemented to solve the caching problem by considering different participants (e.g., BS, mobile network operation (MNO), user equipment (UE)) competing with each other to minimize transmission delay and service cost in the networks.
Different from game theory and optimization theory, stochastic geometry schemes realize content caching and replacement by using the corresponding statistical characteristics \cite{14SpatiallyCooperativeCaching,15ContentPlacementfor}.
In \cite{14SpatiallyCooperativeCaching}, the authors developed an analytical framework based on a probabilistic model from stochastic geometry to characterize the performance of the proposed caching policy in terms of cache hit rate.
In \cite{15ContentPlacementfor}, the authors proposed a probabilistic content placement based best tradeoff between content diversity gain and collaborative gain, and stochastic geometry tool was applied to quantify the tradeoff.

On the other hand, learning algorithms have in recent years been used widely to optimize content caching problems by learning key attribute features, such as user request behavior, content popularity, and user mobility distribution.
In \cite{18UserPreferenceLearning}, the authors used the supervised learning approach to design an online content popularity prediction algorithm based on content features and user preference.
In \cite{19Machinelearningbased}, for random content requests, the researchers used $K$-means and $K$-Nearest Neighbor (K-NN) unsupervised algorithms to implement inter-cluster and intra-cluster caching, respectively.
A deep learning-based active caching scheme was proposed in \cite{20DeepCachNet} that collected sparse information about the relevance between content and specific users from a content popularity matrix.

Reinforcement learning (RL) for edge caching has also received significant attention in recent years due to its advantage in decision-making in non-stationary environments. In \cite{21DistributedEdgeCaching} and \cite{22Deepreinforcementlearningfor}, a Markov decision process (MDP) was used to characterize the content request model to describe the temporal and spatial fluctuation of traffic demand in F-RANs.
In \cite{23LearningApproaches} and \cite{24Multi-Agent}, the authors applied a deep reinforcement learning (DRL) framework to implement the dynamic caching decision and optimize the content delivery problem. To overcome the curse of dimensionality of RL and improve the convergence speed, a distributed edge caching algorithm based on deep Q-network (DQN) was considered in \cite{25ContentRecommendation} and \cite{26DuelingDeepQNetwork}. However, most centralized learning algorithms are prone to over-utilization of storage, which can significantly consume network communications resources \cite{27Deepreinforcementlearningfor}. Meanwhile, centralized training requires users to upload their data to the cloud server, and data security of users may therefore be compromised \cite{28VehicularEdgeComputing}. On the other hand, distributed learning approaches do not need data transfer, but require a large amount of cache and action space, which represents a great challenge for edge servers with limited storage and computation resources in F-RANs \cite{29AMulti-AgentDeep}.

Federated learning (FL), as a special machine learning architecture, can protect user privacy by sharing the training model without requiring users to provide their own data.
In \cite{30FederatedDeepReinforcement} and \cite{31CooperativeEdgeCaching}, the authors proposed a cooperative caching strategy based on federated deep reinforcement learning to find the optimal caching policy in F-RANs.
In \cite{32AJointLearning}, the authors studied the training problem of FL algorithm in wireless networks and analyzed the effect of transmit power on the training efficiency of FL.
In \cite{33AdaptiveFederated}, the authors proposed an adaptive asynchronous FL mechanism to change intelligently the number of local update models for achieving global model aggregation.
In \cite{34WhenDeepReinforcement}, the researchers trained a two-timescale RL model by using FL in a distributed manner, in order to achieve real-time and low-overhead computation offloading and resource allocation strategies.

Model transfer in FL, however, can also consume a high amount of communications resources. In the above literature, the communications efficiency problem of FL was not analyzed.
To improve the communications efficiency and reduce the communications cost of FL framework, in \cite{35FederatedLearningStrategies}, the authors proposed two methods to reduce the upstream communications cost: structured updates and sketch updates. In \cite{36FedPAQ} and \cite{37Communication-Efficient}, the authors proposed an FL approach with periodic averaging and quantified communications to address the communications efficiency and scalability challenges. In \cite{38RobustandCommunication-Efficient}, the researchers proposed a sparse-valued compression framework to meet the special requirements of FL environments.

Model computation and communications transmission are two major factors that influence the efficiency of FL. Current intelligent edge devices are equipped with abundant computing resources. The local training of an FL model takes only a few milliseconds, while model transmission takes several seconds, which is influenced by the network bandwidth and the participants distribution of FL. The cross-territory participants make the communications delay of FL grow larger. How to effectively improve the training efficiency and communications efficiency of FL without degrading the model performance has become a key to whether the FL model can be deployed practically.

Inspired by the above discussions, in this paper, we consider a federated reinforcement learning method with quantization (FRLQ) to solve the problem of cooperative caching in F-RANs.
The main contributions are summarized as follows:
\begin{itemize}
\item[1)] We train a deep reinforcement learning model based on dueling DQN in single F-AP of cooperative caching scenario. By learning user request behavior and content popularity, the F-AP can adaptively select caching update actions to make an optimal content caching decision.

\item[2)] We deploy an FL framework leveraging both the cloud and the edge in F-RANs. Through the collaborative training of DRL models from multiple F-APs, the problems of insufficient data and slow convergence of model generated by individual F-AP training alone are solved. At the same time, the FL framework allows user data to always stay in local F-APs  instead of being transferred to the cloud server for model training, which protects user data privacy.

\item[3)] We use periodic updates to increase the communications interval and reduce communications rounds for the FL framework. For the local models to be uploaded, we apply pruning and quantization methods to reduce the number of model parameters. Based on the layer-sensitive transmission method, the model training speed and communications efficiency of the federated reinforcement learning framework are improved.

\item[4)] We analyze the performance of the proposed FRLQ-based cooperative caching policy. We theoretically prove that our policy converges globally and analyze its computational complexity. Simulation results verify the convergence of our proposed policy and show that the transmitted parameters of our policy drop to 60\% with minimal loss of model accuracy compared with the baseline schemes.

\end{itemize}

The rest of this paper is organized as follows.
Section II describes the system model and formulates the caching optimization problem. In Section III, we propose an FRLQ-based cooperative caching policy in F-RANs. In Section IV, simulation results verify the superiority of the proposed cooperative caching policy compared to benchmark schemes. Finally, conclusions are drawn in Section V.

\section{System Model And Problem Formulation}
In this section, we first introduce the network topology of cooperative edge caching in F-RANs. And then, we establish the delay model and the cache replacement model. Finally, we formulate the cooperative caching optimization problem for the system model. For convenience, the notations of some key parameters are presented in Table I.
\subsection{Network Topology}
\begin{figure}
  \centering
  \includegraphics[width=0.44\textwidth]{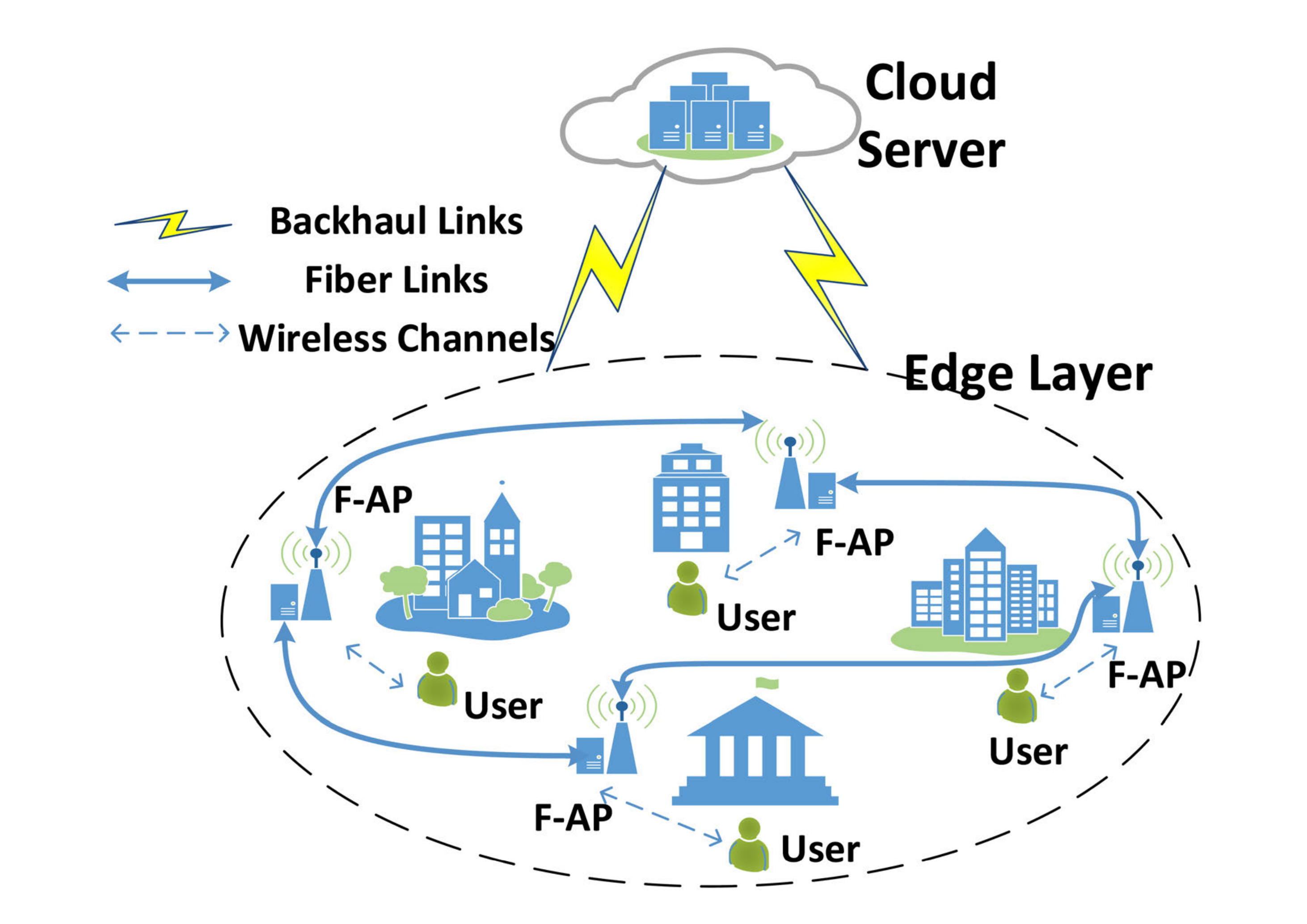}
  \caption{The network topology of cooperative edge caching in F-RANs.}\label{systemmodel}
\end{figure}

The considered network topology of cooperative edge caching in F-RANs is shown in Fig. \ref{systemmodel}, which consists of three parts: the cloud server, the edge layer and the users. The cloud server includes the core network and content providers, which can provide powerful contents and computing services. The edge layer consists of multiple F-APs, which can provide certain computing and caching services for the users. In the network, the cloud server is far away from the edge layer and transmits contents to F-APs through backhaul links. The edge layer serves users who are close to it through wireless channels. The F-APs in the edge layer are connected to each other through fiber links. In the considered cooperative caching scenario, the cloud server serves several F-APs and an F-AP serves multiple users. We denote the set of F-APs as $\mathcal{N}=\{1,2,... ,n,... ,N\}$ and the set of users covered by F-AP $n$ as ${{\mathcal{U}}_{n}}=\{1,2,...,{{u}_{n}},...,{{U}_{n}}\}$. The requested contents of all users are represented as a content library $\mathcal{F}=\{1,2,...,f,...,F\}$, and the size of content $f$ is denoted as $L_f$. We assume that the cloud server can provide all requested contents for users, and each F-AP can only cache a certain amount of requested contents because of the limited cache capacity. The cache capacity of F-AP $n$ is denoted as $C_n$.

\begin{table}[t]\footnotesize \label{keyparameters}
  \centering
  \caption{Key Parameters}
\begin{tabular}{|c|l|}
  \hline
  Notation & Description\\
  \hline
  $N$ & Number of F-APs \\
  \hline
  $U_n$ & Number of users covered by F-AP $n$\\
  \hline
  $F$  & Total number of contents\\
  \hline
  $T$ & Total time slots\\
  \hline
  $L_f$ &  Size of content $f$\\
  \hline
  $C_n$ & Cache capacity of F-AP $n$\\
  \hline
  $P_f$ & Global content popularity of content $f$\\
  \hline
  $p_{n,f}$ & Request probability of content $f$ in F-AP $n$ \\
  \hline
  $\eta$ & The skewness parameter of the M-Zipf distribution\\
  \hline
  $\lambda$ & The plateau parameter of the M-Zipf distribution\\
  \hline
  $d^a$ & Transmission delay between F-APs \\
  \hline
  $d^b$ & Transmission delay between the cloud and the F-AP\\
  \hline
  $d_n^c$ & Transmission delay between F-AP $n$ and user $u_n$\\
  \hline
  $\mathcal{S}_t$ & State space \\
  \hline
  $\mathcal{A}(\mathcal{S}_t)$ & Action space \\
  \hline
  $\mathcal{R}(\mathcal{S}_t,\mathcal{A}(\mathcal{S}_t))$ & Reward function \\
  \hline
  $Q(\mathcal{S}_t,\mathcal{A}(\mathcal{S}_t))$  & Action-value function \\
  \hline
  $V(\mathcal{S}_t)$  & State-value function \\
  \hline
  $\theta$ & Deep neural network parameter \\
  \hline
  $\mathcal{D}$ & Replay memory \\
  \hline
  $G_n$ & Local model of F-AP $n$ \\
  \hline
  $g_{i,n}$ & The $i$th layer of local model in F-AP $n$ \\
  \hline
  $\vartheta_{i,n}$ & Mean change of the parameters in the $i$th layer \\
  \hline
  $\Delta_{i,n} $ & Mean change set of the parameters of all layers \\
  \hline
  $f_n(\theta^n_t)$ & Loss function of local model in F-AP $n$ \\
  \hline
  $f(\theta_t)$ & Loss function of global model \\
  \hline
  $Q(\cdot)$  & Quantization process for the model \\
  \hline
  $Y$ & Total periods of FRLQ\\
  \hline
  $X$ & Total number of local updates in each period \\
  \hline
  \end{tabular}
\end{table}

We divide continuous time into discrete time slots and the set of time slots is denoted as $\mathcal{T}=\{1,2,... ,t,... ,T\}$. The network operates during each time slot. During time slot $t$, the network performs the following three operations. In the first phase, each F-AP will share its content cache status with neighboring F-APs and the cloud server so as to share its locally cached content information. In the second phase, each F-AP will send the content to the user. If the requested content of the user is cached in the local F-AP, the local F-AP will send the content to the user directly. If the requested content is not cached locally, the local F-AP will fetch the content from neighboring F-APs or the cloud server. Finally, each F-AP will update its locally cached contents based on the content request information of its users. We use ${{P}_{f}}$ to denote the global popularity of content $f$, and ${{P}_{f}}$ also represents the probability that content $f$ is requested by all users in the network. The set of global popularity is denoted as $\mathcal{P}=\{{{P}_{1}},{{P}_{2}},...,{{P}_{f}},...,{{P}_{F}}\}$. Let ${{p}_{n,f}}$ denote the probability that content $f$ is requested by the users in F-AP $n$. Here, we have ${{P}_{f}}=\frac{1}{N}\sum\nolimits_{n\in \mathcal{N}}{{{p}_{n,f}}}$ and ${{P}_{f}}$ satisfies the Mandelbrot-Zipf (M-Zipf) distribution \cite{26DuelingDeepQNetwork}:
\begin{equation}\label{contentpopularity}
  {{P}_{f}}=\frac{{{({{I}_{f}}+{{\lambda }})}^{-{{\eta  }}}}}{\sum\nolimits_{j\in \mathcal{F}}{{{({{I}_{j}}+{{\lambda }})}^{-{{\eta  }}}}}},\forall f\in \mathcal{F},
\end{equation}
where ${{I}_{f}}$ denotes the rank of content $f$ in the descending order of global content popularity, and ${{\eta}}$ and ${{\lambda}}$ are the skewness and plateau parameters of the M-Zipf distribution. The higher ${{\eta }}$ is, the lower the popularity is for contents requested by the users.

\subsection{Delay Model}
From Fig. \ref{systemmodel}, the delay of transmitting a content between F-APs is denoted by ${{d}^{a}}$, the delay of transmitting a content between the F-AP and the cloud server is denoted by ${{d}^{b}}$, and the delay of F-AP $n$ transmitting a content to user $u_n$ is denoted by ${{d}^{c}_n}$. For the wireless channel between user $u_n$ and F-AP $n$, we assume that orthogonal frequency division multiple access (OFDMA) is employed where each user occupies a resource block (RB). According to \cite{32AJointLearning}, the downlink rate of F-AP $n$ transmitting a content to its served user $u_n$ can be expressed as:
\begin{equation}\label{downloadrate}
{{c}_{n,{{u}_{n}}}}=\sum\limits_{m=1}^{M}{{{k}_{n,{{u}_{n}},m}}B{{\mathbb{E}}_{{{h}_{n,{{u}_{n}}}}}}({{\log }_{2}}(1+\frac{{{P}_{n}}{{h}_{n,{{u}_{n}}}}}{{{I}_{m}}+B{{N}_{0}}}})),
\end{equation}
where $\boldsymbol{k}_{n,u_n}=[k_{n,u_n,1},... ,k_{n,u_n,M}]^\text{T}$ is an RB assignment vector, $M$ is the total number of RBs, $k_{n,u_n,m}\in{\{0,1}\}$ and $\sum\limits_{m=1}^{M}{{{k}_{n,u_n,m}}}=1$. $k_{n,u_n,m}=1$ means that RB $m$ is assigned to user $u_n$ in the coverage of F-AP $n$, and $k_{n,u_n,m}=0$ otherwise. $B$ is the bandwidth of each RB and $P_n$ is the transmit power of F-AP $n$ for each RB. ${{h}_{n,{{u}_{n}}}}$ is the channel power gain between F-AP $n$ and user $u_n$, which is determined by the distance between F-AP $n$ and user $u_n$. ${N}_{0}$ is the noise power spectral density, and ${I}_{m}$ is the interference power caused by the users that are located in other F-APs using RB $m$.
Then, the delay $d_{n}^{c}$ can be expressed as:
\begin{equation}\label{transmission_delay}
  d_{n}^{c}=\frac{L_f}{{c}_{n,{{u}_{n}}}}.
\end{equation}

\subsection{Cache Replacement Model}
We model the content cache replacement process in F-AP $n$ as an MDP. The state space, action space, and reward function of the MDP are described as follows.

1) State space $\mathcal{S}_{t}$: In the network, the state space consists of the content cache state $\boldsymbol{s}_{t,n}^\text{c}$ of F-AP $n$, the content request state $\boldsymbol{s}_{t,u_n}^\text{r}$ of user $u_n$ and the allocation state $\boldsymbol{k}_{n,u_n}$ of RB. During time slot $t$, the content cache state of F-AP $n$ is denoted by $\boldsymbol{s}_{t,n}^\text{c}=\{s_{t,n,1}^\text{c},... ,s_{t,n,f}^\text{c}\,... ,s_{t,n,F}^\text{c}\},n\in \mathcal{N},f\in \mathcal{F}$, where $s_{t,n,f}^\text{c} \in \{0,1\}$, $s_{t,n,f}^\text{c}=1$ means that F-AP $n$ has cached content $f$, and $s_{t,n,f} ^\text{c}=0$ otherwise. The content request state of user $u_n$ is denoted by $\boldsymbol{s}_{t,u_n}^\text{r}=\{s_{t,u_n,1}^\text{r},... ,s_{t,u_n,f}^\text{r},... ,s_{t,u_n,F}^\text{r}\},u_n\in \mathcal{U}_n, f\in \mathcal{F}$, where $s_{t,u_n,f}^\text{r}\in \{0,1\}$, $s_{t,u_n,f}^\text{r}=1$ indicates that user $u_n$ makes a request for content $f$, and $s_{t,u_n,f}^\text{r}=0$ otherwise. In summary, the state space in our network is represented as:
    \begin{equation}\label{cachestate}
  {\mathcal{S}_{t}}=\{\boldsymbol{s}_{t,n}^\text{c},\boldsymbol{s}_{t,u_n}^\text{r},\boldsymbol{k}_{n,u_n}\}.
    \end{equation}

2) Action space $\mathcal{A}({\mathcal{S}_{t}})$: In order to adapt to the dynamic environment, F-AP $n$ needs to decide which content of the local cache will be replaced and where the user requested content will be processed (local F-AP, neighboring F-APs or the cloud server). $\mathcal{A}({\mathcal{S}_{t}})$ denotes the action space under state $\mathcal{S}_{t}$ and can be expressed as follows:
    \begin{equation}\label{actionspace}
      \mathcal{A}({\mathcal{S}_{t}})=\{a_{t,n,f}^\text{local},\boldsymbol{a}_{t,n,f}^{\text{F-AP}},a_{t,n,f}^\text{cloud},\boldsymbol{a}_{t,n}^\text{R}\}.
    \end{equation}
    The action space $\mathcal{A}({\mathcal{S}^{t}})$ consists of the following four network actions.

    \text{      }a) Local processing action $a_{t,n,f}^\text{local}$: $a_{t,n,f}^\text{local}\in \{0,1\}$ represents whether the requested content $f$ of the user can be processed by the local F-AP $n$, where $a_{t,n,f}^\text{local}=1$ means that the requested content $f$ of the user can be fetched by F-AP $n$, and $a_{t,n,f}^\text{local}=0$ otherwise.

   \text{ }b) F-AP processing action $\boldsymbol{a}_{t,n,f}^{\text{F-AP}}$: When the requested content $f$ of user $u_n$ is not cached in the local F-AP $n$, the request will be routed to neighboring F-APs for processing. $\boldsymbol{a}_{t,n,f}^{\text{F-AP}}=\{a_{t,n,f,1}^{\text{F-AP}},a_{t,n,f,2}^{\text{F-AP}},... ,a_{t,n,f,l}^{\text{F-AP}},... ,a_{t,n,f,N}^{\text{F-AP}}\}$ is the set of F-AP processing actions, where $a_{t,n,f,l}^{\text{F-AP}}\in \{0,1\},l\in \mathcal{N},l\ne n$, $a_{t,n,f,l}^{\text{F-AP}}=1$ means that F-AP $l$ is selected to process the requested content $f$ from user $u_n$, and $a_{t,n,f,l}^{\text{F-AP}}=0$ otherwise.

    \text{ }c) Cloud server processing action $a_{t,n,f}^\text{cloud}$: If user $u_n$ is not able to obtain its requested content $f$ from F-AP $n$ or neighboring F-APs, then F-AP $n$ will fetch the requested content $f$ from the cloud server to serve user $u_n$. $a_{t,n,f}^\text{cloud}\in \{0,1\}$, where $a_{t,n,f}^\text{cloud}=1$ means that the requested content $f$ of the user served by F-AP $n$ is processed by the cloud server, otherwise $a_{t,n,f}^\text{cloud}=0$.

    \text{ }d) Cache update action $\boldsymbol{a}_{t,n}^\text{R}$: Cache update action is denoted by $\boldsymbol{a}_{t,n}^\text{R}=\{a_{t,n,0}^\text{R},a_{t,n,1}^\text{R},... .a_{t,n,j}^\text{R},... ,a_{t,n,F}^\text{R}\}$, where $a_{t,n,j}^\text{R}\in \{0,1\}$ means whether the current cached content $j$ will be replaced by the requested content. $a_{t,n,j}^\text{R}=1$ means that the cached content $j$ will be replaced by the current requested content of user $u_n$, and $a_{t,n,j}^\text{R}=0$ otherwise. In order to ensure real-time response to the user request and to further reduce the cost of cache update in F-AP $n$, cache update will occur when $s_{t,u_n,f}^\text{r}\cap a_{t,n,f}^\text{cloud}\cap a_{t,n,j}^\text{R}=1$.

3) Reward function $\mathcal{R}(\mathcal{S}_{t},\mathcal{A}(\mathcal{S}_{t}))$: There are three ways for users to obtain their requested contents.

\text{ }a) The user obtains its requested content from the local F-AP. If the requested content of the user is cached in the local F-AP, the F-AP will send the content to the user directly. For all user requests, the average delay of transmitting contents to users from the local F-AP can be expressed as:
            \begin{equation}\label{delay1}
  {{D}_{t}^\text{F-U}}=\sum\limits_{f=1}^{F}{\sum\limits_{n=1}^{N}{a_{t,n,f}^\text{local}}}{{p}_{n,f}}d_{n}^{c}.
\end{equation}

 \text{ }b) The user obtains its requested content from neighboring F-APs. If the requested content of the user is not cached in the local F-AP, the local F-AP will fetch the requested content from neighboring F-APs and send it to the user. For all user requests, the average delay of transmitting contents from neighboring F-APs to users can be expressed as:
            \begin{equation}\label{delay2}
  {{D}_{t}^\text{F-F-U}}=\sum\limits_{f=1}^{F}{\sum\limits_{n=1}^{N}{\sum\limits_{l=1}^{N}{a_{t,n,f,l}^\text{F-AP}{{p}_{n,f}}(d_{n}^{c}}}}+{{d}^{a}}).
\end{equation}

\text{ }c) The user obtains its requested content from the cloud server. If the requested content of the user is not cached in the local F-AP and neighboring F-APs, the user will need to fetch its requested content from the cloud server. For all user requests, the average delay of transmitting contents from the cloud server to users can be expressed as:
\begin{equation}\label{delay3}
  {{D}_{t}^\text{C-F-U}}=\sum\limits_{f=1}^{F}{\sum\limits_{n=1}^{N}{a_{t,n,f}^\text{cloud}}}{{p}_{n,f}}(d_{n}^{c}+{{d}^{b}}).
\end{equation}

In summary, during time slot $t$, in order to improve the quality of service (QoS) for users and minimize the average content request delay, the immediate reward function of the network can be expressed as:
        \begin{equation}\label{rewardfunction}
{{\mathcal{R}}}({{\mathcal{S}}_{t}},\mathcal{A}({{\mathcal{S}}_{t}}))=\left\{ \begin{aligned}
  & -{{\zeta }_{1}}{{D}_{t}^{\text{F-U}}},\text{ }\text{       Local Service},\\
 & -({{\zeta }_{2}}{{D}_{t}^{\text{F-F-U}}}+{{\zeta }_{1}}{{D}_{t}^{\text{F-U}}}),\text{     F-APs Service}, \\
 & -({{\zeta }_{3}}{{D}_{t}^{\text{C-F-U}}}+{{\zeta }_{1}}{{D}_{t}^{\text{F-U}}}),\text{     Cloud Service}, \\
\end{aligned}
\right.
\end{equation}
where ${{\zeta }_{1}}$, ${{\zeta }_{2}}$ and ${{\zeta }_{3}}$ denote weight factors, and ${{\zeta }_{1}}+{{\zeta }_{2}}+{{\zeta }_{3}}=1$.

\subsection{Problem Formulation}
Based on (\ref{rewardfunction}), in order to reduce the content request delay of users, our optimization goal is to find a policy ${\pi }$ under the state $\mathcal{S}$ of the network to maximize the long-term cumulative reward expectation, which can be expressed as follows:
\begin{equation}\label{goalproblem}
  {{V}_{{\pi }}}(\mathcal{S})=\underset{T\to +\infty }{\mathop{\lim }}\,E[\sum\limits_{\tau =t}^{T}{{{\gamma }^{\tau -t}}{\mathcal{R}_{\pi}}({\mathcal{S}_{t}},\mathcal{A}({\mathcal{S}_{t}}}))|{\mathcal{S}}],
\end{equation}
where ${\mathcal{R}_{\pi}}({\mathcal{S}_{t}},\mathcal{A}({\mathcal{S}_{t}}))$ denotes the immediate reward generated by the network executing the policy ${\pi }$, and $\gamma$ is a constant between $[0,1]$ denoting the discount factor. For rewards that are further away from the current time slot, they are less important. Our goal is to find the optimal policy ${{\pi }^{*}}({{\mathcal{S}}_{t}},\mathcal{A}({{\mathcal{S}}_{t}}))$ such that
\begin{equation}\label{optimalfunction}
  {{\pi }^{*}}({{\mathcal{S}}_{t}},\mathcal{A}({{\mathcal{S}}_{t}}))=\arg \underset{{{\pi }}}\max {{V}_{{\pi }}}(\mathcal{S}).
\end{equation}
Therefore, the corresponding cooperative caching problem, which requires minimizing the content request delay and finding the optimal caching policy to maximize the long-term cumulative reward, can be expressed as follows:
\begin{align}
  &\arg \underset{{{\pi }}}\max {{V}_{{\pi }}}(\mathcal{S}), \label{goalproblem}\\
 \text{s.t.}&\sum\nolimits_{f\in \mathcal{F}}{L_fa_{t,n,f}^\text{local}}\le C_n, \tag{\ref{goalproblem}a}\nonumber \label{con_a}\\
 & s_{t,n,f}^\text{c},s_{t,u_n,f}^\text{r}\in \{0,1\}, \tag{\ref{goalproblem}b}\nonumber \label{con_b}\\
 &a_{t,n,f}^\text{local},a_{t,n,f,l}^\text{F-AP},a_{t,n,f}^\text{cloud},a_{t,n,j}^\text{R}\in \{0,1\}. \tag{\ref{goalproblem}c}\nonumber \label{con_c}
\end{align}
The constraint (\ref{con_a}) ensures that the size of the cached contents in F-AP $n$ does not exceed its maximum cache capacity $C_n$.

\section{Framework Design of FRLQ}
In this section, in order to address the formulated cache optimization problem, we propose an FRLQ-based cooperative caching policy. First, dueling DQN is used to solve the cache update problem of single F-AP. Then, we deploy an FL framework between the cloud server and the edge layer to achieve collaborative training of local models from multiple F-APs. Subsequently, we quantify the uploaded local model from each F-AP to compress model parameters so as to improve the communications efficiency of the FL framework. Finally, we analyze the global convergence and computational complexity of the proposed policy.
\subsection{Cooperative Edge Caching Policy based on Dueling DQN}
We propose a DRL-based cooperative caching policy to solve the caching optimization problem in (\ref{goalproblem}). Compared with traditional DQN, dueling DQN can solve the reward over-estimation problem and the reward bias problem of DQN, and has strong decision-making ability. Therefore, we employ dueling DQN in each F-AP, which can make the optimal caching decision based on the user request information, the local content caching state, and the content popularity. Let $Q({\mathcal{S}_{t}},\mathcal{A}({\mathcal{S}_{t}}))$ denote the action-value function in dueling DQN. When the local F-AP performs the action $\mathcal{A}({\mathcal{S}_{t}})$ under the state ${\mathcal{S}_{t}}$, it will get the feedback reward.

From the Bellman equation, we can obtain the state-value function as follows:
\begin{equation}\label{statefunction}
  V({\mathcal{S}_{t}})=\max \{{{R}}({\mathcal{S}_{t}},\mathcal{A}({\mathcal{S}_{t}}))+\alpha \sum\limits_{{\mathcal{S}_{t+1}}}{{{P}_{{\mathcal{S}_{t}}{\mathcal{S}_{t+1}}}}V({\mathcal{S}_{t+1}})\}},
\end{equation}
where $\alpha$ is the learning rate, ${P}_{{\mathcal{S}_{t}}{\mathcal{S}_{t+1}}}$ represents the probability of transiting from the state $\mathcal{S}_t$ to the state $\mathcal{S}_{t+1}$. The action-value function $Q({\mathcal{S}_{t}},\mathcal{A}({\mathcal{S}_{t}}))$ is a concrete action representation for the state-value function $V({\mathcal{S}_{t}})$. We choose the optimal action-value function as the state-value function. Then, we have:
\begin{equation}\label{optimalfunction}
  V({\mathcal{S}_{t}})=\underset{\mathcal{A}({\mathcal{S}_{t}})}{\mathop{\max }}\,\{Q({\mathcal{S}_{t}},\mathcal{A}({\mathcal{S}_{t}}))\}.
\end{equation}
The action-value function can be expressed as:
\begin{multline}\label{actionfunction}
  	Q({\mathcal{S}_{t}},\mathcal{A}({{S}_{t}}))={\mathcal{R}}({\mathcal{S}_{t}},\mathcal{A}({\mathcal{S}_{t}}))\\
  +\alpha \sum\limits_{{\mathcal{S}_{t+1}}}{{{P}_{{\mathcal{S}_{t}}{\mathcal{S}_{t+1}}}}}\underset{\mathcal{A}({\mathcal{S}_{t+1}})}{\mathop{\max }}\,Q({\mathcal{S}_{t+1}},\mathcal{A}({\mathcal{S}_{t+1}})).
  \end{multline}
The iterative update formula for the action-value function is:
\begin{multline}\label{actionupdate}
  {{Q}^{*}}({\mathcal{S}_{t}},\mathcal{A}({\mathcal{S}_{t}}))\leftarrow Q({\mathcal{S}_{t}},\mathcal{A}({\mathcal{S}_{t}}))+\alpha ({\mathcal{R}}({\mathcal{S}_{t}},\mathcal{A}({\mathcal{S}_{t}}))\\+
  \gamma \underset{\mathcal{A}({\mathcal{S}_{t+1}})}{\mathop{\max }}\,Q({\mathcal{S}_{t+1}},\mathcal{A}({\mathcal{S}_{t+1}}))-Q({\mathcal{S}_{t}},\mathcal{A}({\mathcal{S}_{t}}))).
\end{multline}

In dueling DQN, a deep neural network $Q({\mathcal{S}_{t}},\mathcal{A}({\mathcal{S}_{t}});\theta )$ is used to model the action-value function
$Q({\mathcal{S}_{t}},\mathcal{A}({\mathcal{S}_{t}}))$, where $\theta$ is the weight parameter of the deep neural network. Then, the loss function of neural network training can be expressed as:
\begin{multline}\label{objectivefunction}
  {{f}}({{\theta }})=\mathbb{E}[({\mathcal{R}}({\mathcal{S}_{t}},\mathcal{A}({\mathcal{S}_{t}}))+\gamma \underset{\mathcal{A}({\mathcal{S}_{t}})}{\mathop{\max }}\,Q({\mathcal{S}_{t+1}},\mathcal{A}({\mathcal{S}_{t+1}});{{\theta }})\\
  -Q({\mathcal{S}_{t}},\mathcal{A}({\mathcal{S}_{t}});{{\theta }}))^{2}],
\end{multline}
where $\gamma$ is the discount factor. The network model is updated by using the gradient descent algorithm. The gradient of the loss function in (\ref{objectivefunction}) is expressed as:
\begin{align}\label{gradient}
  {{\nabla }_{{{\theta }}}}{{f}}({{\theta }})=&\mathbb{E}[({\mathcal{R}({\mathcal{S}_{t}},\mathcal{A}({\mathcal{S}_{t}}))}+\gamma \underset{\mathcal{A}({\mathcal{S}_{t+1}})}{\mathop{\max }}\,Q({\mathcal{S}_{t+1}},\mathcal{A}({\mathcal{S}_{t+1}});{{\theta }}) \notag \\
 &-Q({\mathcal{S}_{t}},\mathcal{A}({\mathcal{S}_{t}});{{\theta }})){{\nabla }_{{{\theta }}}}Q({\mathcal{S}_{t}},\mathcal{A}({\mathcal{S}_{t}});{{\theta }})].
\end{align}

Dueling DQN has an excellent ability of learning and fault tolerance. The reason is that dueling DQN introduces two key factors: target network and experience playback mechanism, which are described in detail as follows.

1) \textbf{Target network}:
Dueling DQN uses two neural networks for learning: the prediction network $Q\left( \mathcal{S}_t, \mathcal{A}(\mathcal{S}_t);{\theta} \right)$ is used to evaluate the Q-value of the current state-action pair; the target network $Q\left( \mathcal{S}_t,\mathcal{A}(\mathcal{S}_t);{\theta}^{-} \right)$ is used to generate the target Q-value, where $\theta^{-}$ is the weight parameter of the target network. The target Q-value can be expressed as:
\begin{equation}\label{targetQ}
  \text{Target Q}=({\mathcal{R}}({\mathcal{S}_{t}},\mathcal{A}({\mathcal{S}_{t}}))+\gamma \underset{\mathcal{A}({\mathcal{S}_{t+1}})}{\mathop{\max}}\,Q({\mathcal{S}_{t+1}},\mathcal{A}({\mathcal{S}_{t+1}});{{\theta}^{-}}).
\end{equation}
Dueling DQN updates the parameter $\theta$ of the prediction network according to the loss function in (\ref{objectivefunction}). After every $M$ steps of iterations, the parameter $\theta $ of the prediction network is copied to the parameter ${{\theta }^{-}}$ of the target network for updating. Double network training can then solve the reward over-estimation problem of DQN.

2) \textbf{Experience playback mechanism}: Dueling DQN introduces an experience playback mechanism to store the experience samples in the replay memory $\mathsf{\mathcal{D}}$ during each time slot. The experience samples can be obtained by the interaction between agents (i.e., F-APs) and the environment. Then, mini-batches of samples are randomly selected from the replay memory $\mathsf{\mathcal{D}}$ for neural network training. Dueling DQN saves a large amount of historical experience samples, and each experience sample is stored in $\left(\mathcal{S}_t,\mathcal{A}(\mathcal{S}_t),\mathcal{R}(\mathcal{S}_t,\mathcal{A}(\mathcal{S}_t)),\mathcal{S}_{t+1} \right)$. It indicates that the learning agent reaches the new state $\mathcal{S}_{t+1}$ after performing the action $\mathcal{A}(\mathcal{S}_t)$ under the state $\mathcal{S}_t$ and obtains the corresponding reward $\mathcal{R}(\mathcal{S}_t,\mathcal{A}(\mathcal{S}_t))$. In the experience playback mechanism, randomly sampling mini-batches for training is helpful to remove the correlation and dependence between samples and reduce the deviation in the evaluation of the value function. It solves the problem of data correlation and non-static distribution to make the model easier to converge.

\begin{figure}
  \centering
  \includegraphics[width=0.48\textwidth]{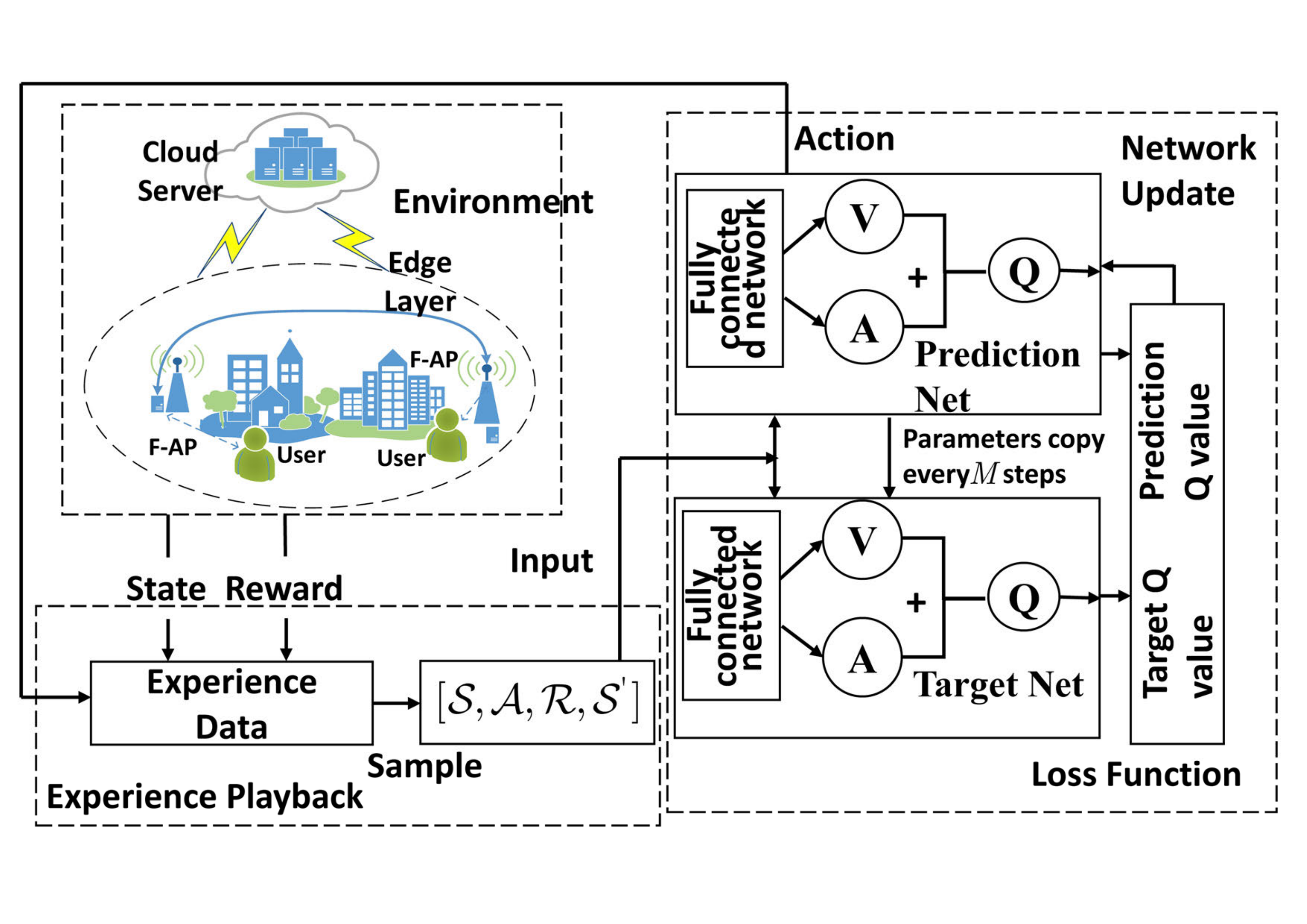}
  \caption{Caching replacement process of dueling DQN.}\label{duelingDQNoperation}
\end{figure}

Dueling DQN replaces the full connected neural network of traditional DQN with the dueling network. The dueling network decomposes the action-value function $Q({\mathcal{S}_{t}},\mathcal{A}({\mathcal{S}_{t}}))$ into a state-value function $V({\mathcal{S} _{t}})$ and an action advantage function $A({\mathcal{S}_{t}},a({\mathcal{S}_{t}}))$. The state-value function $V({\mathcal{S}_{t}})$ is independent of the action, but the action advantage function $A({\mathcal{S}_{t}},a({\mathcal{S}_{t}}))$ is related to the action. $A({\mathcal{S}_{t}},a({\mathcal{S}_{t}}))$ is the average reward estimation of action $a({\mathcal{S}_{t}})$ relative to the state ${\mathcal{S}_{t}}$, which is used to solve the reward bias problem.
Dueling DQN divides the fully connected layer of the neural network into a state-value $V({\mathcal{S}_{t}})$ and an action advantage value $A({\mathcal{S}_{t}},a({\mathcal{S}_{t}}))$. Then, $V({\mathcal{S}_{t}})$ and $A({\mathcal{S}_{t}},a({\mathcal{S}_{t}}))$ are finally merged back into an action-state value $Q({\mathcal{S}_{t}},a({\mathcal{S}_{t}}))$ by a fully connected layer. The operation process of dueling DQN is shown in Fig. \ref{duelingDQNoperation}.

The state-value function $V({\mathcal{S}_{t}})$ in dueling DQN is expressed as:
\begin{equation}\label{duelingstatefunction}
  V({\mathcal{S}_{t}})\cong V({\mathcal{S}_{t}};\theta ,w_1 ).
\end{equation}
The action advantage function $A({\mathcal{S}_{t}},a({\mathcal{S}_{t}}))$ is expressed as:
\begin{equation}\label{actionadvantage}
  A({\mathcal{S}_{t}},a({\mathcal{S}_{t}}))\cong A({\mathcal{S}_{t}},a({\mathcal{S}_{t}});\theta ,w_2 ).
\end{equation}
The action-state value $Q({\mathcal{S}_{t}},a({\mathcal{S}_{t}}))$ is expressed as:
\begin{align}\label{sumfunction}
  Q({\mathcal{S}_{t}},a({\mathcal{S}_{t}}))&\cong Q({\mathcal{S}_{t}},a({\mathcal{S}_{t}});\theta ,w_1 ,w_2 )\notag \\
  &=V({\mathcal{S}_{t}};\theta ,w_1 )+A({\mathcal{S}_{t}},a({\mathcal{S}_{t}});\theta ,w_2 ),
\end{align}
where $w_1 $ and $w_2 $ are the fully connected layer parameters of the two branches.

In dueling DQN, the action advantage is the individual action advantage function minus the average of all action advantage functions in a certain state.
Then, the action-state value $Q({\mathcal{S}_{t}},a({\mathcal{S}_{t}}))$ can be calculated as:
\begin{multline}\label{meanfunction}
  Q({\mathcal{S}_{t}},a({\mathcal{S}_{t}}))=V({\mathcal{S}_{t}};\theta ,w_1 )\\+(A({\mathcal{S}_{t}},a({\mathcal{S}_{t}});\theta ,w_2 )-\frac{1}{\left| A \right|}\sum\limits_{{\boldsymbol{a}'}}{A({\mathcal{S}_{t}},{\boldsymbol{a}'};\theta ,w_2 )}),
\end{multline}
where $\boldsymbol{a}'$ denotes the set of all actions under state $\mathcal{S}_{t}$. (\ref{meanfunction}) ensures that the relative ordering of the advantage functions for each action remains unchanged in a certain state. Therefore, the range of $Q$ value can be narrowed to remove redundant degrees of freedom and improve algorithmic stability. The dueling DQN based content caching replacement process is shown in Algorithm 1.
Each F-AP optimizes the caching policy $\pi$ by maximizing the long-term cumulative reward based on local content popularity and content request information of users.

\subsection{FRLQ: Federated Reinforcement Learning Method with Quantization}
Few problems arise in the above proposed cooperative caching policy based on dueling DQN in single F-AP. Firstly, when the state feature space is small and the training data is limited, it is very challenging to build a high quality content caching policy based on RL. Transfer learning approaches can build a high-performance RL model by transferring experience, parameters, or gradients from learned tasks to new tasks. However, the privacy of data or models in many privacy-aware applications usually does not allow directly transferring data or models from one agent to another. Secondly, it needs to take a long time to train a good RL agent in single F-AP. Meanwhile, the separate training of individual F-AP agents can also cause an additional waste of resources.

To solve the above problems, in this subsection, we propose FRLQ to implement cooperative caching between the cloud server and the edge layer. By sharing the same DRL model between multiple F-APs and the cloud server, it can solve the problem that the samples are too small for DRL model training in single F-AP. At the same time, the proposed policy allows user data to stay in the local F-AP without sharing with the cloud server for training, which protects the users' privacy. However, transmission of massive model parameters brings about the communications efficiency problem for FL. To improve the communications efficiency, on one hand, we prune and quantize the shared DRL model. On the other hand, we periodically perform global model aggregation to increase the communications interval and reduce the communications rounds. Therefore, the proposed FRLQ consists of three modules: (1) network pruning, (2) quantization and weight sharing, and (3) periodical global aggregation. 

\renewcommand{\algorithmicrequire}{\textbf{Initialization:}}
\renewcommand{\algorithmicensure}{\textbf{Iteration:}}
\begin{algorithm}[!t]
\label{algorithm1}
\begin{algorithmic}[1]
\caption{The dueling DQN based content caching replacement algorithm}
\begin{spacing}{0.8}
\REQUIRE
\quad
\\The replay memory $\mathsf{\mathcal{D}}$;
\\The prediction network $Q$ with the weight parameter $\theta $;
\\The target network $\hat{Q}$ with the weight parameter ${{\theta }^{-}}=\theta $;
\ENSURE
\FOR {time slot $t=1,2,...,T$}
\FOR {user ${u}_{n}=1,2,...,{U}_{n}$}
\STATE User ${u}_{n}$ requests content $f$;
\IF {content $f$ has been cached in F-AP $n$ or the neighbouring F-APs}
\STATE {Fetch content $f$ from F-AP $n$ or the neighbouring F-APs};
          \ELSE
             \STATE Fetch content $f$ from the cloud server;
          \IF {the storage of F-AP $n$ is not full or $s_{t,u_n,f}^\text{r}\cap a_{t,n,f}^\text{cloud}\cap a_{t,n,j}^\text{R}=1$}
          \STATE Cache content $f$ in F-AP $n$ directly or replace content $j$;
          \ENDIF
          \ENDIF
          \STATE Observe the state space $\mathcal{S}_t$;
             \STATE According to the $\varepsilon $-greedy method, choose an action ${\mathcal{A}(\mathcal{S}_t)}=\arg \underset{{\mathcal{A}(\mathcal{S}_t)}}{\mathop{\max }}\,Q\left( {\mathcal{S}_t},{\mathcal{A}(\mathcal{S}_t)} \right)$;
             \STATE Execute the action $\mathcal{A}(\mathcal{S}_t)$, and get a new state $\mathcal{S}_{t+1}$;
             \STATE Obtain the reward $\mathcal{R}(\mathcal{S}_t,\mathcal{A}(\mathcal{S}_t))$ according to (\ref{rewardfunction});
             \STATE Save $(\mathcal{S}_t,\mathcal{A}(\mathcal{S}_t),\mathcal{R}(\mathcal{S}_t,\mathcal{A}(\mathcal{S}_t)),\mathcal{S}_{t+1})$ in $\mathsf{\mathcal{D}}$;
             \STATE Set $\mathcal{S}_{t}=\mathcal{S}_{t+1}$;
             \STATE Randomly sample a mini-batch of samples from $\mathsf{\mathcal{D}}$;
\STATE Update ${\theta} $ of the prediction network by using the gradient descent algorithm in (\ref{gradient});
\STATE Reset ${{\theta}^{-}}={\theta} $ every $M$ steps;	
\ENDFOR
\ENDFOR
\end{spacing}
\end{algorithmic}
\vspace{-1.0em}
\end{algorithm}

\subsubsection{\textbf{Network Pruning}}
Training model in an FL scenario is similar to centralized training because there is significant redundancy in the model parameters.
The literature \cite{39PredictingParameters} states that in most deep neural networks, it is possible to achieve performance similar to or even exceeding the original network with only a few weights (e.g., $5{\%}$), while the remaining weights require no need to learn. The transfer of network weights plays an important role in reducing network transmission overhead. Even if an attacker obtains a subset of parameters, it will be difficult for it to use a model inversion attack to invert the original data. 

Inspired by the literature \cite{44han2016deep}, we propose to prune the local models of F-APs to reduce the transmitted parameters. As shown in Fig. \ref{duelingDQNmodel}, we denote the local model in F-AP $n$ as $G_n=\{{{g}_{1,n}},{{g}_{2,n}},...,{g}_{i,n},...,{{g}_{L,n}}\}$, where ${{g}_{i,n}}$ denotes the $i$th layer of the local model in F-AP $n$. After the local training of FL is performed, F-AP $n$ will change the model from ${{G}_{n}}=\{g_{1,n},g_{2,n},...,g_{i,n},...,g_{L,n}\}$ to $G'_{n}=\{g'_{1,n},g'_{2,n},...,g'_{i,n},...,g'_{L,n}\}$. We express the update at the $i$th layer as:
\begin{equation}\label{layer}
  \vartheta _{i,n}=\operatorname{mean}(\left| g'_{i,n}-g_{i,n} \right|).
\end{equation}
 In (\ref{layer}), we first calculate the update size for the weight of each layer, then take the absolute value of the update size, and finally average the absolute values of all the weight updates.
$\vartheta _{i,n}$ is called the sensitivity of the $i$th layer which indicates the mean change of the parameters of the $i$th layer.
 For F-AP $n$, after the local training of the model, the mean change of each layer of the model $\Delta _{i,n}=\{\vartheta _{1,n},\vartheta _{2,n},...,\vartheta _{i,n},...,\vartheta _{L,n}\}$ is calculated according to (\ref{layer}). The change of each layer is ranked from the largest to the smallest. The larger the change of the layer is, the higher the sensitivity of the layer is. Then, by setting the threshold $\chi $ properly, i.e., $\vartheta _{i,n}>\chi$, only a portion of the model update volume $\Delta _{i,n}$ will be uploaded to the cloud server for the global model update. The parameters of the remaining layers will not be uploaded.

\subsubsection{\textbf{Quantization and Weight Sharing}}
Then, we perform the quantization and weight sharing operations on the parameters of the layers to be uploaded. The process is denoted as $Q(\cdot )$.
Quantification and weight sharing further reduce the number of the uploaded parameters by reducing the number of bits required for each parameter upload. We use $K$-means algorithm to cluster the parameters of each layer. All parameters belonging to the same cluster share the same weight value and there is no shared weights across layers. In each layer, we partition $n$ original parameters into $k$ clusters $\mathcal{W}=\{{{w}_{1}},{{w}_{2}},...,{{w}_{k}}\},n>>k$. The goal of clustering is to minimize the within-cluster sum of squares according to \cite{44han2016deep}, which can be expressed as:
\begin{equation}\label{intra-clustersum}
  \arg \underset{\mathcal{W}}{\mathop{\min }}\,{{\sum\limits_{i=1}^{k}{\sum\limits_{w\in \mathcal{W}}{\left| w-{{w}_{i}} \right|}^{2}}}}.
\end{equation}
The position of the centroid is linearly initialized so that the choice of the centroid position will not be affected by the weight distribution. The centroid of $K$-means clustering is the shared weight.

\begin{figure}
  \centering
  \includegraphics[width=0.38\textwidth]{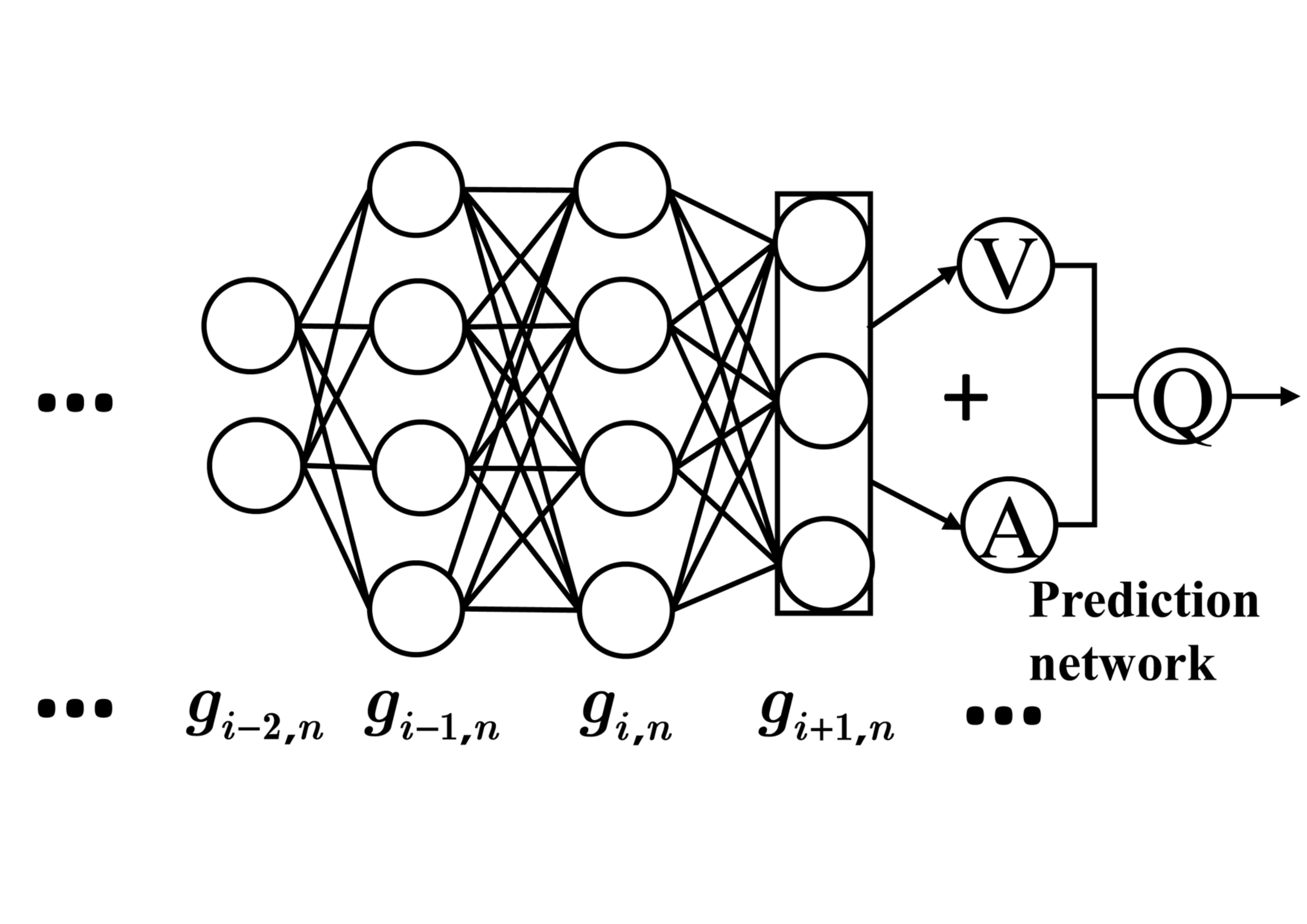}
  \caption{Deep reinforcement learning model in F-AP $n$.}\label{duelingDQNmodel}
\end{figure}

 For $k$ clusters, we only need $\text{log}_{2}(k)$ bits to encode the index. For a network with $n$ connections, it takes $b$ bits for one connection to be uploaded. Limiting $n$ connections to only $k$ shared weights will result in a compression rate of
\begin{equation}\label{compressionrate}
  r=\frac{nb}{n{{\log }_{2}}(k)+kb}.
\end{equation}

For example, we suppose the weights of a single layer neural network with four input units and four output units as shown in Fig. \ref{cluster}. Each square stores the local update of the corresponding weight. Originally we need to spend 16$\times$32 bits to encode 16 weight updates. We perform the quantization and weight sharing process on the single layer neural network. 16 weight updates are compressed into 4 shared weight updates. We only need to spend $16\times\text{log}_{2}(4)$ bits to encode the index and 4$\times$32 bits to encode the four weight updates. Therefore, the data compression rate is 16$\times$32/(16$\times$2+4$\times$32)=3.2.

\begin{figure}
  \centering
  \includegraphics[width=0.48\textwidth]{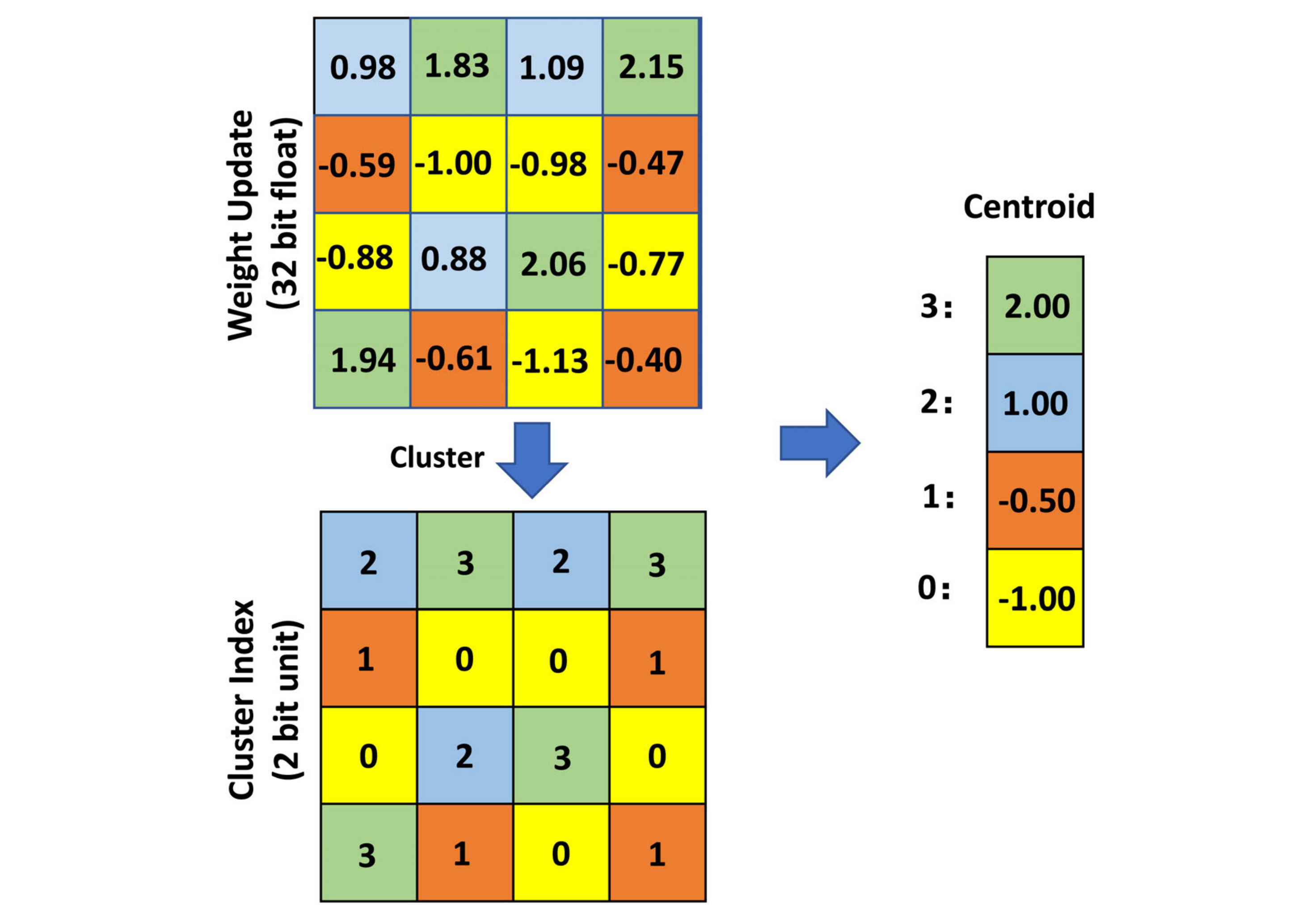}
  \caption{The quantization and weight sharing process.}\label{cluster}
\end{figure}
\subsubsection{\textbf{Periodical Global Aggregation}}
We deploy an FL framework between the cloud server and multiple F-APs. Each F-AP uses its own dataset to train dueling DQN. The loss function of dueling DQN is $f(\theta )$ according to (\ref{objectivefunction}). Therefore, the local learning problem of the FL framework is to find the target parameter $\hat{\theta }_{t}^{n}$ of the loss function${{f}_{n}}(\theta _{t}^n)$ for F-AP $n$, i.e.,
\begin{equation}\label{lossfunction}
  \hat{\theta }_{t}^{n}=\arg \underset{\theta _{t}^{n}}{\mathop{\min }}\,{{f}_{n}}(\theta _{t}^{n}),
\end{equation}
where $\theta _{t}^n$ is the local model parameter of F-AP $n$. The global model can be obtained as follows:
\begin{equation}\label{Theglobalmodel}
  f\left( \theta _{t} \right)=\frac{1}{\sum\nolimits_{n\in \mathcal{N}}{{{\mathcal{D}}_{n}}}}\sum\limits_{n=1}^{N}{{{\mathcal{D}}_{n}}}{{f}_{n}}\left( \theta _{t}^{n} \right),
\end{equation}
where $\theta _{t}$ is global model parameter and ${{\mathcal{D}}_{n}}$ is the local dataset of F-AP $n$. Then, the global learning problem is:
\begin{equation}\label{Theglobalmodelproblem}
 \theta^* = f\left( \theta _{t} \right).
\end{equation}
The learning problem in (\ref{Theglobalmodelproblem}) can be solved by the stochastic gradient descent (SGD) algorithm \cite{30FederatedDeepReinforcement}. For $t=1,2,...,T$, F-AP $n$ updates its local model $\theta _{t}^{n}$ as follows:
\begin{equation}\label{localdataset}
  \theta _{t+1}^{n}=\theta _{t}^{n}-\varphi_t \nabla f_n\left( \theta _{t}^{n} \right),
\end{equation}
where $\varphi_t \ge 0$ is the learning rate. After $T$ iterations, the global model $\theta_{t+1}$ is updated in the cloud server as follows:
\begin{equation}\label{cloudserver}
  \theta _{t+1}=\frac{1}{\sum\nolimits_{n\in \mathcal{N}}{{{\mathcal{D}}_{n}}}}\sum\limits_{n=1}^{N}{{{\mathcal{D}}_{n}}}\theta _{t+1}^{n}.
\end{equation}
The updated global model $\theta _{t+1}$ will be sent to each F-AP for the next round of training. The convergence analysis on the global learning problem in (\ref{Theglobalmodelproblem}) will be shown in the next subsection.

\begin{figure}
  \centering
  \includegraphics[width=0.48\textwidth]{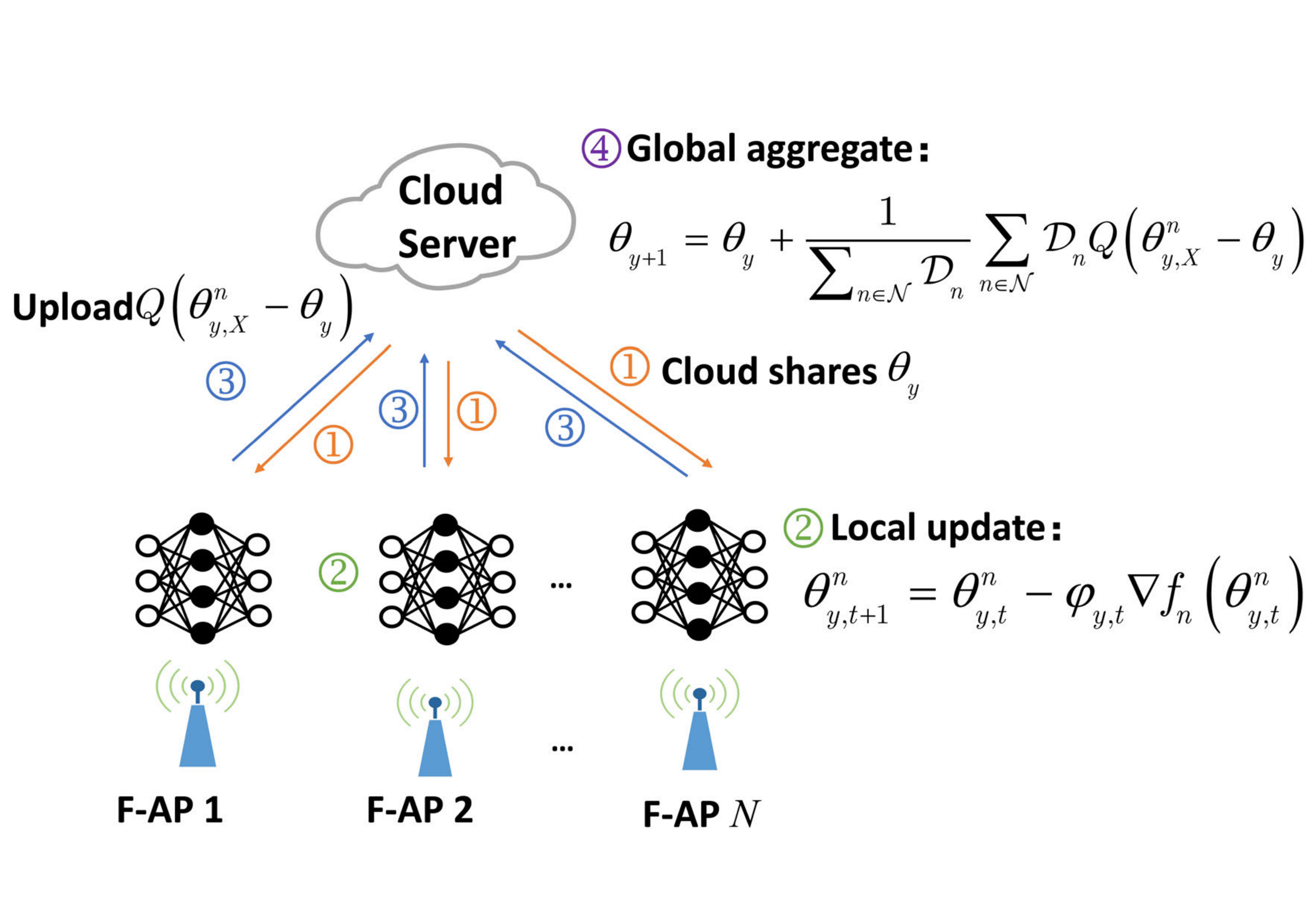}
  \caption{The mechanism of the proposed FRLQ-based cooperative caching policy.}\label{federatedlearning}
\end{figure}

To improve the communications efficiency of the FL framework, we periodically perform global model aggregation to increase the communications interval and reduce the communications rounds. The flow of the proposed FRLQ-based cooperative caching policy is shown in Fig. \ref{federatedlearning}. The proposed policy runs over $Y$ periods. During each period, each F-AP performs $X$ local updates of the DRL model by using its local dataset, so a total $YX$ iterations will be generated. During each period $y=0,1,2,...,Y-1$, each F-AP communicates with the cloud server once and sends the local model parameters to the cloud server. The cloud server aggregates all the received local parameters and then propagates the global model parameters to each F-AP. Specifically, we use $\theta _{y,t}^{n}$ to denote the model parameter of F-AP $n$ in the $t$th iteration of $y$th period. At each local iteration $t=0,1,2,...,X-1$, F-AP $n$ updates the local model according to SGD algorithm as follows:
\begin{equation}\label{stochasticgradientdescent}
  \theta _{y,t+1}^{n}=\theta _{y,t}^{n}-{{\varphi }_{t}}\nabla {{f}_{n}}\left( \theta _{y,t}^{n} \right).
\end{equation}
In this case, stochastic gradient $\nabla {{f}_{n}}$ is calculated by randomly selecting samples from the local dataset ${{\mathcal{D}}_{n}}$.

The proposed policy is formally summarized in Algorithm 2. During every period, the cloud server broadcasts its current model $\theta_y$ to all F-APs. Then, local models of all F-APs use the same initialization $\theta _{y,0}^{n}={{\theta }_{y}}$. After $X$ local updates, F-AP $n$ calculates the parameter update $\theta _{y,X}^{n}-{{\theta }_{y}}$. By considering the layer sensitivity, F-AP $n$ will upload a part of the parameters of the layer by setting the threshold $\chi $. Then, F-AP $n$ executes the process $Q(\cdot )$ to obtain the quantized update $Q\left( \theta _{y,X}^{n}-{{\theta }_{y}} \right)$. Subsequently, F-AP $n$ uploads $Q\left( \theta _{y,X}^{n}-{{\theta }_{y}} \right)$ to the cloud server. After receiving $N$ local quantized updates, the cloud server performs the global aggregation to obtain the global model ${\theta }_{y+1}$ for the next period:
\begin{equation}\label{globalaggregation}
  {{\theta }_{y+1}}={{\theta }_{y}}+\frac{1}{\sum\nolimits_{n\in \mathcal{N}}{{{\mathcal{D}}_{n}}}}\sum\limits_{n\in \mathcal{N}}{{{\mathcal{D}}_{n}}Q\left( \theta _{y,X}^{n}-{{\theta }_{y}} \right)}.
\end{equation}
The model update process will be repeated for $Y$ periods.

\begin{algorithm}[!t]
\label{alg:2}
\begin{algorithmic}[1]
\caption{The FRLQ-based cooperative caching policy}
\REQUIRE
\quad
\\Initialize number of periods $Y$;
\\Initialize step size $\varphi_t $;
\\Initialize local update number $X$;
\ENSURE
\FOR {period $y=1,2,...,Y-1$}
\STATE The cloud server sends $\theta_{y}$ to every F-AP;
\FOR {each F-AP $n \in \mathcal{N}$}
\STATE $\theta _{y,0}^{n}\leftarrow {{\theta }_{y}}$;
\FOR {$t=1,2,...,X-1$}
\STATE F-AP $n$ computes its local update;
\STATE $\theta _{y,t+1}^{n}=\theta _{y,t}^{n}-\varphi_{t} \nabla f_n\left( \theta _{y,t}^{n} \right)$;
\STATE Estimate the convergence according to (\ref{decay});
\ENDFOR
\STATE  Calculate $\vartheta _{i,n}$ and set the threshold $\chi $;
\IF {$\vartheta _{i,n}> \chi $}
\STATE The parameters update of the $i$th layer in F-AP $n$ will be uploaded;
\ELSE
\STATE The parameters update of the $i$th layer in F-AP $n$ will not be uploaded;
\ENDIF
\STATE Perform  $Q(\cdot )$ operation on the uploaded parameter update $\theta _{y,X}^{n}- \theta _y$;
\STATE Send $Q(\theta _{y,X}^{n}- \theta _y)$ to the cloud server;
\ENDFOR
\STATE The cloud server receives $Q(\theta _{y,X}^{n}- \theta _y)$ from each F-AP;
\STATE Compute the global model update:
\STATE ${{\theta }_{y+1}}={{\theta }_{y}}+\frac{1}{\sum\nolimits_{n\in \mathcal{N}}{{{\mathcal{D}}_{n}}}}\sum\limits_{n\in \mathcal{N}}{{{\mathcal{D}}_{n}}Q\left( \theta _{y,X}^{n}-{{\theta }_{y}} \right)}$;
\ENDFOR
\end{algorithmic}
\end{algorithm}

The core idea of the proposed FRLQ-based cooperative caching policy is to combine the cloud server and F-APs to collaboratively train the shared DRL models and accelerate the training process. We use pruning and quantization methods to compress the shared DRL models to reduce the amount of model transmission parameters and lighten the network load. We execute periodic global model aggregation to increase the communications interval and reduce the communications rounds, which further improve the communications efficiency.
\subsection{Performance Analysis}
In this subsection, we will first analyze the global convergence of (\ref{Theglobalmodelproblem}). Subsequently, we will discuss the computational complexity of the proposed FRLQ-based cooperative caching policy.

\subsubsection{\textbf{Global Convergence Analysis}}
In order to analyze the global convergence of (\ref{Theglobalmodelproblem}), we need to make the following assumptions:
\newtheorem{assumption}{\textbf{Assumption}}
\begin{assumption}For all F-APs, $f_n(\theta)$ is $\beta$-smooth and $\mu$-strongly convex, where $f_n(\theta)$ is the objective function of F-AP $n$.
\end{assumption}
\begin{assumption}
 The expected squared norm $G$ of the stochastic gradient is bounded, i.e., $\mathbb{E}{{\left\| \nabla {{f}_{n}}({{\theta }_{t}^n},{{\xi }_{t}^n}) \right\|}^{2}}\le {{G}^{2}}$.
\end{assumption}
\begin{assumption}
The variance $\sigma _{n}^{2}$ of the stochastic gradient for the $t$th iteration of F-AP $n$ is bounded, i.e., $\mathbb{E}{{\left\| \nabla {{f}_{n}}(\theta _{t}^n,\xi _{t}^n)-\nabla {{f}_{n}}(\theta _{t}^n) \right\|}^{2}}\le \sigma _{n}^{2}$, where $\xi _{t}^n$ is the data sampled from the local dataset of F-AP $n$ randomly.
\end{assumption}

We introduce the variable $v_{t+1}^n$ as follows,
\begin{equation}\label{bianliang1}
  v_{t+1}^n=\theta_{t}^n-{{\varphi }_{t}}\nabla {{f}_{n}}(\theta _{t}^n,\xi _{t}^n),
\end{equation}
where $v_{t+1}^n$ is just an intermediate variable and not $\theta _{t+1}^n$ at the next moment, because we have the possibility to execute global aggregation. Then, we have:
\begin{equation}\label{bianliang2}
  \theta _{t+1}^n=\left\{ \begin{aligned}
  & v_{t+1}^n,\text{for}\text{  }t+1\notin {\mathcal{I}_{X}}, \\
 &  \sum\limits_{n=1}^{N}{{{p}_{n}}{{v}_{t+1}^n}},\text{for }t+1\in {\mathcal{I}_{X}}, \\
\end{aligned} \right.
\end{equation}
where $p_n$ is the proportion of the objective function of F-AP $n$ in the global objective function, which is affected by the size of the local dataset of F-AP $n$. $\mathcal{I}_{X}$ is the set of integer multiples of $X$, i.e., global parameter aggregation occurs when $t+1\in {\mathcal{I}_{X}}$.

We can then get the following four variables: ${{\bar{v}}_{t}}=\sum\limits_{n=1}^{N}{{{p}_{n}}v_{t}^n}$, ${{\bar{\theta}}_{t}}=\sum\limits_{n=1}^{N}{{{p}_{n}}\theta _{t}^n}$, ${{g}_{t}}=\sum\limits_{n=1}^{N}{{{p}_{n}}\nabla {{f}_{n}}(\theta _{t}^n,\xi _{t}^n})$, ${{\bar{g}}_{t}}=\sum\limits_{n=1}^{N}{{{p}_{n}}}\nabla {{f}_{n}}(\theta _{t}^n)$. Here, $\theta_{t}^n$ represents the model parameter when the communications transmission can be performed during $t\in \mathcal{I}_X$. When $t\notin \mathcal{I}_X$, communications exchange cannot take place, and the model parameter is denoted by $v_{t}^n$. Due to global participation, ${{\bar{v}}_{t}}={{\bar{\theta }}_{t}}$ for all $t\in \mathcal{I}_X$, and ${{\bar{v}}_{t+1}}={{\bar{\theta }}_{t}}-{{\varphi }_{t}}{{g}_{t}}$. Then, we have the following theorem.

\newtheorem{theorem}{\textbf{Theorem}}
\begin{theorem}
Let Assumption 1, Assumption 2 and Assumption 3 hold. If ${\varphi }_{t}$ is non-increasing with time (i.e., the learning rate is decaying) and ${{\varphi }_{t}}\le \frac{1}{4\beta}$, the proposed FRLQ-based cooperative caching policy follows that:
\begin{equation}\label{globaloptium}
\mathbb{E}{{\left\| {{{\bar{v }}}_{t+1}}-{{\theta }^{*}} \right\|}^2}\le (1-\mu {{\varphi }_{t}})\mathbb{E}{{\|{{\bar{\theta }}_{t}}-{{\theta }^{*}}\|}}^2 +{\varphi }_{t}^{2}H,
\end{equation}
where $\theta^*$ is the global optimal parameter, $H=\sum{p_{n}^{2}\sigma _{n}^{2}}+6\beta\Phi +8{{(X-1)}^{2}}{{G}^{2}}$, $\Phi  ={{f}}(\theta^* )-\sum\limits_{n=1}^{N}{{{p}_{n}}f_{n}^{*}(\theta )}$, ${{f}}(\theta^* )$ is the global optimal function value, and $f_{n}^{*}(\theta )$ is the optimal function value obtained by optimizing F-AP $n$ alone.
\end{theorem}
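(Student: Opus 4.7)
The plan is to follow the standard local-SGD / FedAvg style analysis and decompose the one-step error into three pieces: a contraction toward $\theta^*$, a stochastic-gradient variance, and a client-drift term controlling how much the $\theta_t^n$ depart from their running average $\bar\theta_t$ between two consecutive communication instants in $\mathcal{I}_X$. Starting from the definition $\bar v_{t+1} = \bar\theta_t - \varphi_t g_t$, I add and subtract $\varphi_t \bar g_t$ to get
\begin{equation*}
\|\bar v_{t+1}-\theta^*\|^2 = \|\bar\theta_t-\varphi_t\bar g_t-\theta^*\|^2 + \varphi_t^2\|g_t-\bar g_t\|^2 + 2\varphi_t\langle \bar\theta_t-\varphi_t\bar g_t-\theta^*,\ \bar g_t-g_t\rangle.
\end{equation*}
Taking expectation over the sampling $\{\xi_t^n\}$ and invoking the unbiasedness $\mathbb{E}[g_t\mid \{\theta_t^n\}]=\bar g_t$ kills the cross term, while Assumption~3 together with independence of samples across F-APs gives $\mathbb{E}\|g_t-\bar g_t\|^2\le \sum p_n^2\sigma_n^2$.

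Next I would control the deterministic descent term $A_t:=\|\bar\theta_t-\varphi_t\bar g_t-\theta^*\|^2$. Expanding,
\begin{equation*}
A_t = \|\bar\theta_t-\theta^*\|^2 - 2\varphi_t\langle \bar\theta_t-\theta^*,\bar g_t\rangle + \varphi_t^2\|\bar g_t\|^2.
\end{equation*}
The key is that $\bar g_t=\sum_n p_n\nabla f_n(\theta_t^n)$ is evaluated at the local iterates, not at $\bar\theta_t$. Using $\beta$-smoothness and $\mu$-strong convexity of each $f_n$, and the standard trick $\|\bar g_t\|^2\le 2\beta\sum_n p_n\bigl(f_n(\theta_t^n)-f_n^*\bigr)$, one derives after some algebra
\begin{equation*}
A_t \le (1-\mu\varphi_t)\|\bar\theta_t-\theta^*\|^2 + 6\beta\varphi_t^2\Phi + 2\sum_n p_n\|\bar\theta_t-\theta_t^n\|^2,
\end{equation*}
where the gap $\Phi = f(\theta^*)-\sum_n p_n f_n^*$ absorbs the heterogeneity between local optima and the global optimum. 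The assumption $\varphi_t\le 1/(4\beta)$ is exactly what lets the $\varphi_t^2\|\bar g_t\|^2$ term be split into the contraction factor $(1-\mu\varphi_t)$ and the $6\beta\varphi_t^2\Phi$ remainder without extra cross interactions.

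The hard part, as usual in local-SGD proofs, is bounding the drift $\sum_n p_n\mathbb{E}\|\bar\theta_t-\theta_t^n\|^2$. Let $t_0$ be the most recent element of $\mathcal{I}_X$ with $t_0\le t$; by the aggregation rule $\theta_{t_0}^n=\bar\theta_{t_0}$ for every $n$. Unrolling the local SGD recursion \eqref{stochasticgradientdescent},
\begin{equation*}
\theta_t^n-\bar\theta_{t_0} = -\sum_{s=t_0}^{t-1}\varphi_s\,\nabla f_n(\theta_s^n,\xi_s^n),
\end{equation*}
and since $t-t_0\le X-1$, Jensen's inequality plus Assumption~2 give $\mathbb{E}\|\theta_t^n-\bar\theta_{t_0}\|^2\le (X-1)\sum_{s=t_0}^{t-1}\varphi_s^2 G^2\le (X-1)^2\varphi_{t_0}^2 G^2$. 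The non-increasing property of $\varphi_t$ lets me replace $\varphi_{t_0}$ by $\varphi_t$ (up to the usual constant), and a symmetric argument for $\bar\theta_t-\bar\theta_{t_0}$ gives another factor, yielding $\sum_n p_n\mathbb{E}\|\bar\theta_t-\theta_t^n\|^2\le 4(X-1)^2\varphi_t^2 G^2$. Substituting this, together with the variance bound and the estimate for $A_t$, into the first display reassembles the stated inequality with $H=\sum p_n^2\sigma_n^2+6\beta\Phi+8(X-1)^2G^2$. The main obstacle is keeping the constants tight in the decomposition of $A_t$, since careless splitting either loses the contraction factor $(1-\mu\varphi_t)$ or inflates the coefficient in front of $\Phi$; this is precisely where the step-size condition $\varphi_t\le 1/(4\beta)$ becomes indispensable.
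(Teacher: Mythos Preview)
Your proposal is correct and follows essentially the same approach as the paper: the identical three-term decomposition of $\|\bar v_{t+1}-\theta^*\|^2$, the same use of $\beta$-smoothness and $\mu$-strong convexity to reduce $A_t$ to the bound $(1-\mu\varphi_t)\|\bar\theta_t-\theta^*\|^2+6\beta\varphi_t^2\Phi+2\sum_n p_n\|\bar\theta_t-\theta_t^n\|^2$, and the same drift bound obtained by unrolling from the last synchronization point $t_0\in\mathcal I_X$ and applying Assumption~2. The constants and the final assembly into $H$ match the paper exactly.
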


\begin{proof}
Please see Appendix A.
\end{proof}

 Theorem 1 indicates that the proposed FRLQ-based cooperative caching policy is bounded. We set ${{\Delta }_{t}}=\left\| {{{\bar{\theta }_t}}}-{{\theta }^{*}} \right \|$ and further have:
\begin{equation}\label{conclusion}
  {{\Delta }_{t+1}}\le (1-\mu {{\varphi }_{t}}){{\Delta }_{t}}+{\varphi }_{t}^{2}H.
\end{equation}
To verify the global convergence of the proposed cooperative caching policy, we need to prove that ${{\Delta }_{t}}$ decreases gradually over time. Then, we have the following theorem.

\begin{theorem}
 Assume that ${{\varphi }_{t}}=\frac{b}{t+a}$, where $b>\frac{1}{\mu }$ and $a>0$, such that ${{\varphi }_{1}}\le \min \{\frac{1}{\mu },\frac{1}{4\beta }\}$ and ${{\varphi }_{t}}\le 2{{\varphi }_{t+X}}$. Then, ${{\Delta }_{t}}$ decreases gradually over time, i.e.,
\begin{equation}\label{decay}
  {{\Delta }_{t}}\le \frac{\rho }{a+t},
\end{equation}
where $\rho =\max \{\frac{{{b}^{2}}H}{b\mu -1},(a+1){{\Delta }_{1}}\}$.
\end{theorem}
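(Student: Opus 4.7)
The plan is to prove Theorem 2 by induction on $t$, using the one-step contraction from Theorem 1 as the engine. Setting $\Delta_t = \|\bar{\theta}_t - \theta^*\|$, Theorem 1 already yields the recursion $\Delta_{t+1} \le (1 - \mu\varphi_t)\Delta_t + \varphi_t^2 H$, so the entire task reduces to showing that this recursion is compatible with the target decay rate $\rho/(a+t)$ once $\varphi_t = b/(t+a)$ is plugged in and $\rho$ is chosen appropriately.

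For the base case $t=1$, I would simply observe that the definition $\rho = \max\{\tfrac{b^2 H}{b\mu - 1}, (a+1)\Delta_1\}$ immediately gives $\rho \ge (a+1)\Delta_1$, hence $\Delta_1 \le \rho/(a+1)$. The assumption $b > 1/\mu$ is what makes the first argument of the max well-defined, and the conditions $\varphi_1 \le 1/\mu$ and $\varphi_1 \le 1/(4\beta)$ are needed so that $1 - \mu\varphi_t \ge 0$ throughout and so that Theorem 1 applies at every step.

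For the inductive step, assume $\Delta_t \le \rho/(a+t)$. Substituting $\varphi_t = b/(t+a)$ into the recursion gives
\begin{equation*}
\Delta_{t+1} \le \Bigl(1 - \tfrac{\mu b}{t+a}\Bigr)\tfrac{\rho}{t+a} + \tfrac{b^2 H}{(t+a)^2}
= \tfrac{(t+a-1)\rho}{(t+a)^2} - \tfrac{(b\mu-1)\rho - b^2 H}{(t+a)^2}.
\end{equation*}
The crucial observation — and this is the reason $\rho$ is defined with a max rather than arbitrarily — is that $\rho \ge b^2 H/(b\mu - 1)$ forces $(b\mu-1)\rho - b^2 H \ge 0$, so the second term is nonpositive and can be dropped. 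It then suffices to verify the elementary inequality $(t+a-1)(t+a+1) \le (t+a)^2$, which is just $(t+a)^2 - 1 \le (t+a)^2$; dividing through gives $\tfrac{(t+a-1)\rho}{(t+a)^2} \le \tfrac{\rho}{t+a+1}$, closing the induction.

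The main conceptual obstacle is really just recognizing why $\rho$ must be defined as that particular maximum: the first argument controls the inductive step (the noise term $\varphi_t^2 H$ must be dominated by what the contraction gives up), while the second argument seeds the induction at $t=1$. No subtle telescoping or probabilistic argument is needed beyond the one-step recursion already established in Theorem 1, so the proof is short; the only technical care required is to keep the assumption $\varphi_t \le 2\varphi_{t+X}$ visible (it is what allows Theorem 1's bound to be applied uniformly across the non-communication rounds embedded in $H$) and to note that the decaying stepsize assumption is consistent with $\varphi_t = b/(t+a)$.
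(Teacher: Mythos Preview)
Your proposal is correct and follows essentially the same route as the paper's own proof: both argue by induction on $t$, seed the base case via $\rho \ge (a+1)\Delta_1$, substitute $\varphi_t = b/(t+a)$ into the recursion from Theorem~1, drop the bracketed term $\tfrac{b^2 H}{(t+a)^2} - \tfrac{(b\mu-1)\rho}{(t+a)^2}$ using $\rho \ge \tfrac{b^2 H}{b\mu - 1}$, and finish with $(t+a-1)(t+a+1) \le (t+a)^2$. Your additional remarks on why each hypothesis is needed are accurate and go slightly beyond what the paper spells out, but the argument itself is the same.
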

\begin{proof}
Please see Appendix B.
\end{proof}

Theorem 2 indicates the upper bound of the distance between the iterative parameter ${\bar{\theta}}_{t}$ and the global optimal parameter $\theta^*$ is gradually reduced. Therefore, the proposed FRLQ-based cooperative caching policy is able to converge to the global optimum. In Algorithm 2, we can evaluate the convergence of the proposed policy (line 8) according to Theorem 2 to determine whether the algorithm will stop the training.
\subsubsection{\textbf{Computational Complexity Analysis}}
For the dueling DQN based content caching replacement algorithm, we need to respond to the content requests of users during $T$ time slots. The time complexity of the algorithm is $\mathcal{O}(Tz)$, where $z$ denotes the total number of iterations. For the experience playback mechanism in dueling DQN, it is assumed that $E$ experience samples are stored, so the space complexity of the algorithm is $\mathcal{O}(E)$. Similar to the related studies in \cite{30FederatedDeepReinforcement}, the proposed dueling DQN-based content cache placement algorithm is lightweight and easy to deploy in each F-AP.

For the FRLQ-based cooperative caching policy, the caching policy is repeated over $Y$ periods, and $N$ F-APs perform $X$ local model updates in each period. We consider that the global aggregation of FL uses the synchronous update algorithm, so the time complexity of the algorithm is $\mathcal{O}(YX)$. In the model quantization process, we use the $K$-means algorithm to cluster and compress the parameters of the uploaded models. The time complexity of the $K$-means algorithm is $\mathcal{O}({{i}_{K\text{-means}}}kW)$, where $k$ denotes the number of clusters, ${{i}_{K\text{-means}}}$ denotes the maximum iteration number of the clustering algorithm, and $W$ denotes the number of parameters to be clustered. Accordingly, the time complexity of the FRLQ-based cooperative caching policy is $\mathcal{O}({{i}_{K\text{-means}}}knYX)$. Here, $X$ and $Y$ are preset constants before training, and $n$ is the number of weights in each layer of the neural network, which is also known. Therefore, the time complexity of the proposed FRLQ-based cooperative caching policy is lower and on par with the traditional $K$-means algorithm.
\section{Simulation Results}
The performance of the proposed FRLQ-based cooperative caching policy is evaluated via simulations. We consider a cloud server whose coverage is a circular area with a radius of 500 m.  We assume that the content library contains 1000 contents and the size of each content $L_f$ is 1 MB. The settings of other parameters are given in Table II.
Dueling DQN consists of a fully connected feedforward neural network in the hidden layer, which is used to construct the prediction network $Q$ and the target network $\hat{Q}$.

We adopt the following caching schemes as benchmarks, which are described in detail as follows.

1) \textbf{Least recently used (LRU)}: The stored content that is least recently requested will be replaced by the new content.

2) \textbf{Least frequently used (LFU)}: The stored content with the least request frequency will be replaced.

3) \textbf{Centralized DRL}: The dynamic content replacement is operated in the cloud server by deploying the centralized DRL algorithm \cite{40wang2016dueling}.

4) \textbf{Federated deep reinforcement learning (FRL)}: This scheme employs the FL framework without pruning and quantizing the transmission model \cite{30FederatedDeepReinforcement}.

The goal of the paper is to find the optimal content caching policy by minimizing the content request delay of users under the dynamic network environment and cache capacity constraint of F-APs. We perform pruning and quantization on the shared model, which ensures that our model is lightweight and easy to deploy. First, we simulate the convergence of the proposed policy and the variation of model accuracy after pruning and quantization. Then, in order to evaluate the network performance of the proposed policy, we do the simulations by taking average request delay and cache hit rate as dependent variables, the cache size of F-AP $C_n$, the skewness factor $\eta$ and the number of F-APs $N$ as independent variables, respectively. The cache hit rate of F-AP $n$ during time slot $t$ can be calculated as follows:
\begin{equation}
{{h}_{t,n}}=1-\sum\limits_{l\in \mathcal{N}}{\sum\limits_{f\in \mathcal{F}}{\left( {{a}_{t,f,n,l}^\text{F-AP}}+{{a}_{t,f,n}^{\text{cloud}}} \right)}}{{p}_{n,f}}.
\end{equation}
The cache hit rate indicates the probability that the requested content of users can be obtained from the local cache of the F-AP serving them. Compared with four baseline schemes, we can see the superiority of the proposed policy.

\begin{table}[t]\footnotesize \label{Table2}
  \caption{Parameter Values}
  \centering
  \vspace{1em}
\begin{tabular}{|c|c|c|}
  \hline
  Parameter & Value & Description \\
  \hline
  ${\eta}$ & 0.8 & the skewness factor \\
  \hline
  ${\lambda}$ & 0.1 & the plateau factor \\
  \hline
  $B$ & 20 MHz &  RB bandwidth \\
  \hline
  $P_n$ & 1 W &  transmit power of F-AP $n$  for each RB \\
  \hline
  $N_0$ & -174 dBm/Hz & noise power spectral density \\
  \hline
  $N$  &  10  & the number of F-APs \\
  \hline
  $d_a$ & 2 ms & the delay of F-APs cooperation \\
  \hline
  $d_b$ & 10 ms & the delay of cloud-F-AP \\
  \hline
  ${{\varsigma }_{1}}$, ${{\varsigma }_{2}}$, ${{\varsigma }_{3}}$ & 0.1, 0.2, 0.7 & the hyper parameters \\
  \hline
  ${{\alpha }}$ & 0.001 & the learning rate\\
  \hline
  $\gamma $ & 0.9 & the discount factor\\
  \hline
\end{tabular}
\end{table}

Fig. \ref{lossone} compares the learning curves of the proposed FRLQ-based caching policy and the traditional FRL-based caching policy for the model training loss. It can be seen that the loss curve decreases rapidly as the number of training steps increases. After enough training steps, the loss converges to a stable state. Compared with the traditional FRL-based caching policy, the proposed caching policy is less stable at the beginning of model training. The model parameters change a lot at the beginning of model training, and the proposed policy further quantizes and compresses the model to make it lighter, so the model converges slowly. 
We can also see that the proposed policy converges when the training steps reach 400 or more.

Fig. \ref{parameternumber} shows the ratio of the number of uploaded parameters to the number of parameters of the FRL-based policy versus training periods for different policies. By adjusting the threshold  $\chi $, we set the percentage of uploaded layers to 80\% and 90\% for simulations respectively. When 90\% of the layers participate in the global model update, the number of parameters after pruning and quantization only accounts for $50\%\sim60\%$ of the parameters of the FRL-based policy. When 80\% of the layers participate in the global model update, the number of parameters after pruning and quantization is further reduced to only $30\%\sim40\%$.
As the period increases, the uploaded model parameters decrease slowly. The reason is that as the period increases, the model gradually converges. The weight update amount of each layer of the model becomes smaller and smaller, and the uploaded parameters participating in the global model update also decrease.

Fig. \ref{losstwo} depicts how the training loss of the model changes with the number of periods.
As the period increases, the training loss of the model decreases gradually.
The convergence of the proposed policy based on FRLQ (90\%) is significantly better than that of the proposed policy based on FRLQ (80\%). The reason is that the more the model parameters are uploaded, the faster the convergence speed will be.

\begin{figure}
  \centering
  \includegraphics[width=0.44\textwidth]{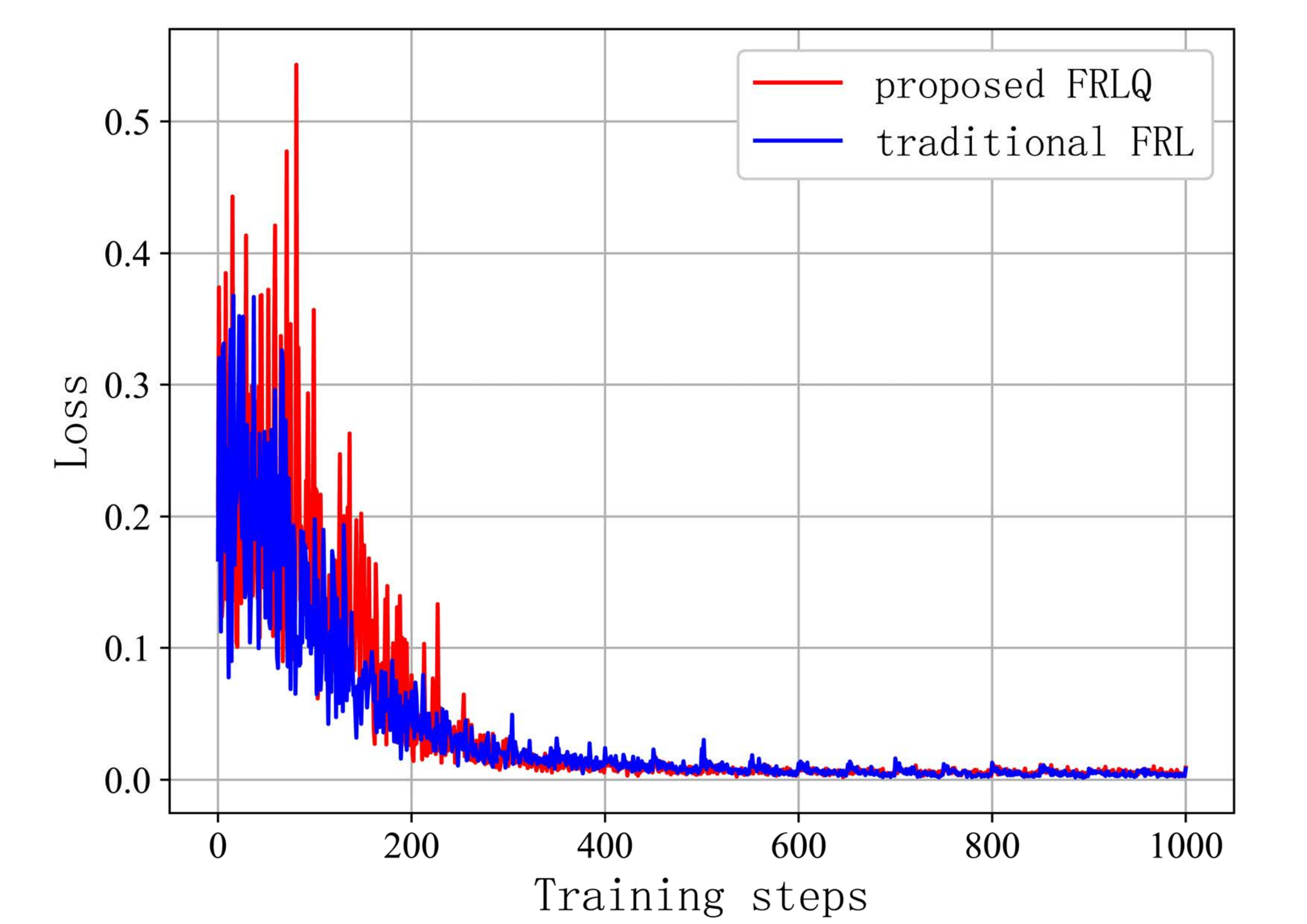}
  \caption{Loss between Q-values and target values for varying training steps.}\label{lossone}
\end{figure}

Fig. \ref{hitrate_timeslot} shows the performance comparison between the proposed policy and the baseline schemes in terms of cache hit rate. We can observe that cache hit rate increases with time slots. When $t<1000$, the cache hit rates of the proposed policy, the traditional FRL-based policy and the centralized DRL-based policy are smaller than that of LFU and LRU. The reason is that the RL-based caching policies have fewer samples to learn in the replay memory at the beginning of the training process. As the time slot increases and the learning samples accumulate, the performance of the RL-based policies is significantly better than that of LFU and LRU. We can also see that the cache hit rate of the proposed policy grows significantly faster than that of the FRL-based caching policy. The reason is that the proposed policy only needs 60\% of the latter on the number of uploaded parameters, which can be seen from the Fig. \ref{parameternumber}. The proposed policy significantly reduces the network load and makes the network cache update faster, and the cache hit rate of F-AP also increases rapidly.


\begin{figure}
  \centering
  \includegraphics[width=0.42\textwidth]{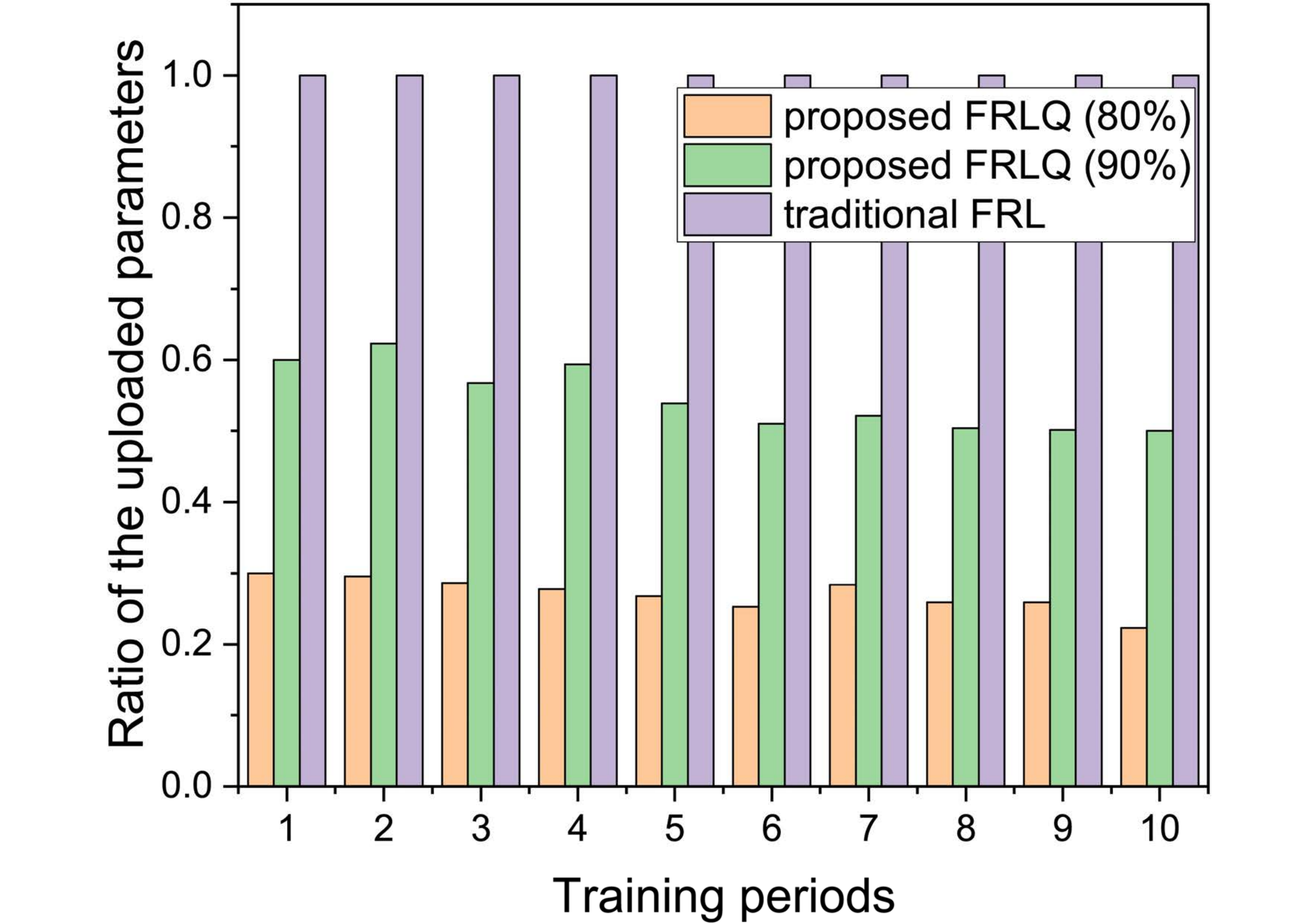}
  \caption{Ratio of uploaded parameters for varying training periods.}\label{parameternumber}
\end{figure}

\begin{figure}
  \centering
  \includegraphics[width=0.42\textwidth]{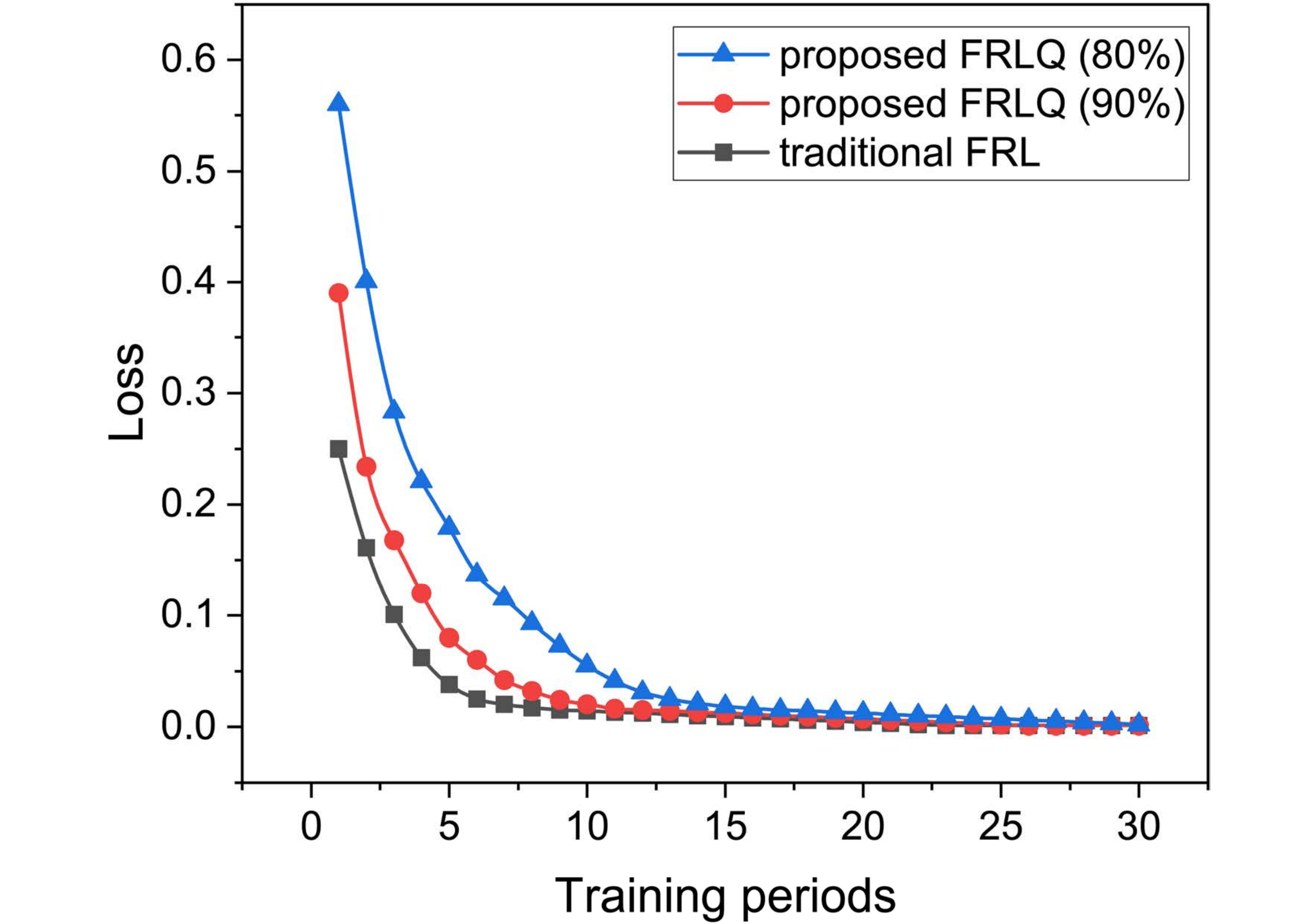}
  \caption{Loss between Q-values and target values for varying training periods.}\label{losstwo}
\end{figure}

Fig. \ref{hitrate_cachesize} shows the variation of the cache hit rate with cache size of single F-AP for the proposed policy and the baseline caching schemes. We can observe that the cache hit rate increases with the cache size of single F-AP. The reason is that more contents can be cached locally, so the probability that the content requests can be answered locally also increases.
We can also observe that although the number of model parameters of the proposed policy is significantly reduced, the cache hit rate is however significantly higher than that of LFU and LRU, and only slightly lower than that of the FRL-based caching policy and the centralized DRL-based caching policy.

In Fig. \ref{delay_alpha}, we show how the average request delay varies with the skewness factor $\eta $ for different caching policies. It can be observed that the average request delay decreases as the skewness factor $\eta $ increases. A larger skewness factor $\eta $ means that more popular contents can be cached in local F-APs, therefore the average request delay decreases correspondingly.

In Fig. \ref{delay_number}, we plot the curves of the average request delay as the number of F-APs varies. It can be seen that the average request delay increases with the number of F-APs. A larger number of F-APs means more F-APs with different local content popularity, and the variety of contents requested by users increases with the number of F-APs. However, the local cache capacity of F-APs remains unchanged, so the average request delay increases. Meanwhile, since all F-APs in the network share the limited RB resources, the F-AP's transmit power $P_n$ remains constant, but the signal interference power $I_m$ increases rapidly with the number of F-APs. Therefore, the rate of F-APs transmitting contents to users decreases, and the average request delay increases correspondingly.

\begin{figure}
  \centering
  \includegraphics[width=0.42\textwidth]{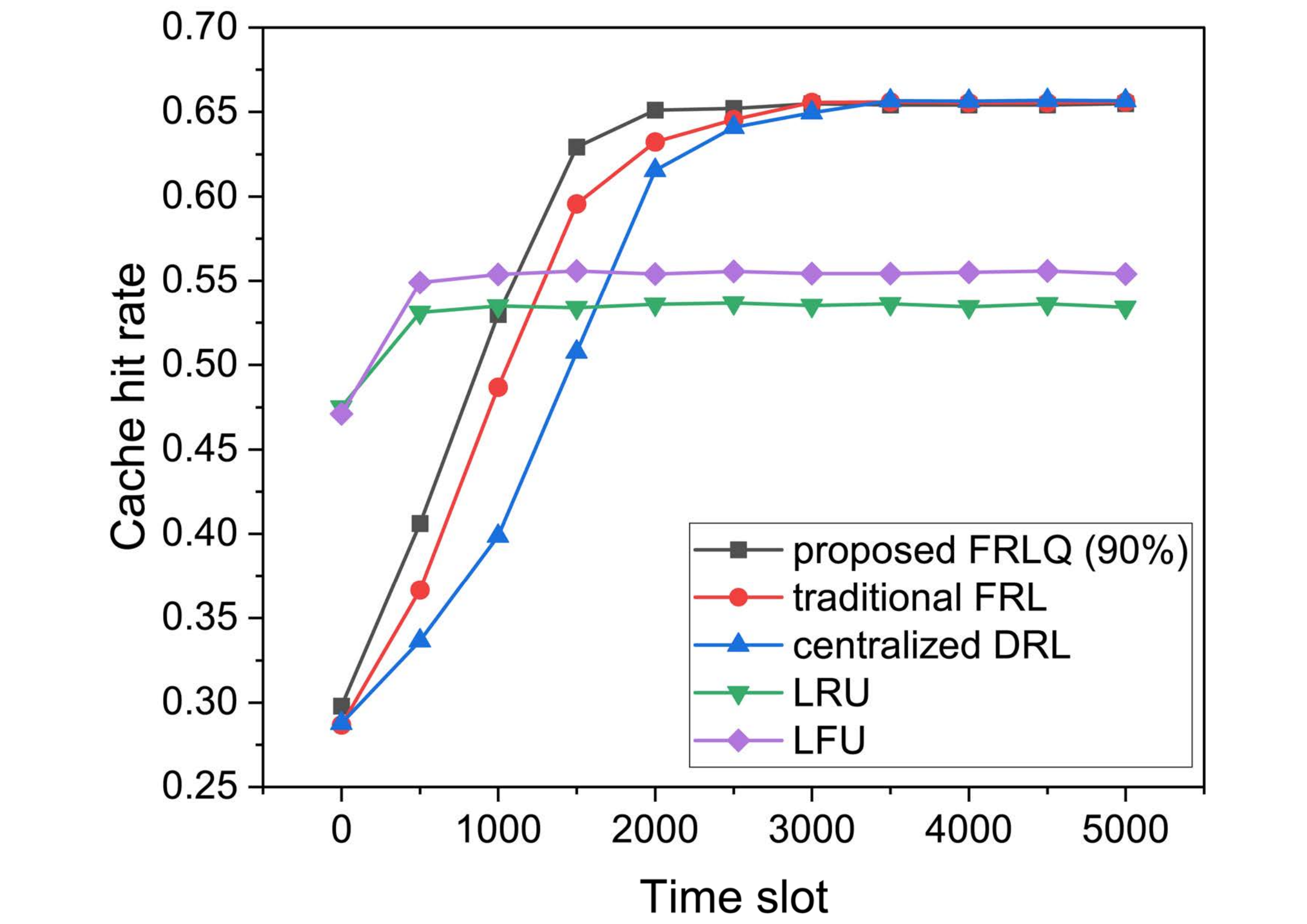}
  \caption{Cache hit rate versus time slot for different caching policies.}\label{hitrate_timeslot}
\end{figure}

\begin{figure}
  \centering
  \includegraphics[width=0.42\textwidth]{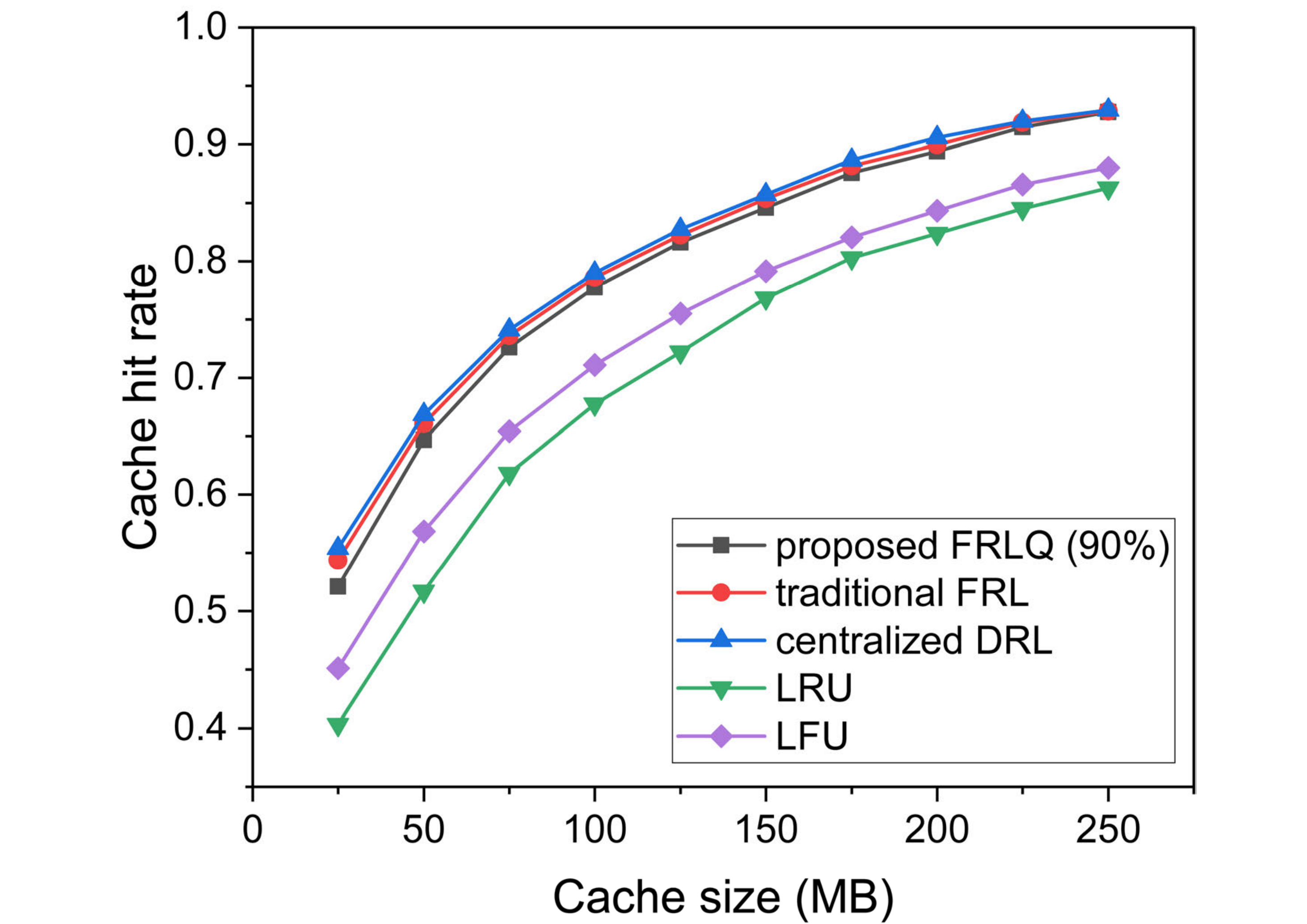}
  \caption{Cache hit rate versus cache size of single F-AP for different caching policies.}\label{hitrate_cachesize}
\end{figure}

As shown in Fig. \ref{hitrate_cachesize}, Fig. \ref{delay_alpha} and Fig. \ref{delay_number}, by considering the cache size, the skewness factor and the number of F-APs, the proposed caching policy outperforms LFU and LRU in terms of cache hit rate and average request delay.
In LFU, the F-AP updates contents according to the requested frequency. In LRU, the F-AP updates contents according to whether the content has been recently requested.
The two caching schemes ignore cooperative caching among multiple F-APs in the network.
For the centralized DRL-based caching policy, all content cache updates are uniformly scheduled by the cloud server. And the users' data will be uploaded to the cloud server, which threatens the data security.
This policy is also difficult to deploy in reality, because the load pressure on backhaul links is relatively large. On the contrary, the proposed policy uploads significantly fewer parameters for the global model update, whereas it achieves almost the same performance as the FRL-based caching policy. Obviously, the proposed caching policy is more lightweight to deploy.

\section{Conclusions}
In this paper, we have proposed the federated reinforcement learning method with quantization for cooperative edge caching in F-RANs. Considering the dynamic network environment, we have formulated the cooperative caching problem as an MDP. We have deployed dueling DQN in each F-AP to find the optimal caching policy by learning the user request behavior and content popularity. We have also established an FL framework between the edge layer and the cloud server.
Local DRL models from multiple F-APs can be trained collaboratively through the FL framework.
On one hand, it avoids the data security problem caused by directly transmitting data of users to the cloud server. On the other hand, it solves the problem of insufficient samples for training a DRL model in single F-AP. Finally, the uploaded local model has been quantized to make the model lighter, which reduces the network transmission pressure.
Furthermore, we have also demonstrated the global convergence and low time complexity of the proposed policy.
Simulation results have shown that the proposed policy achieves almost the same performance as the benchmark schemes, and is more lightweight and easy to deploy.
Future work will explore the joint optimization of user request delay and model transfer delay in F-RANs.

\begin{figure}[t!]
  \centering
  \includegraphics[width=0.42\textwidth]{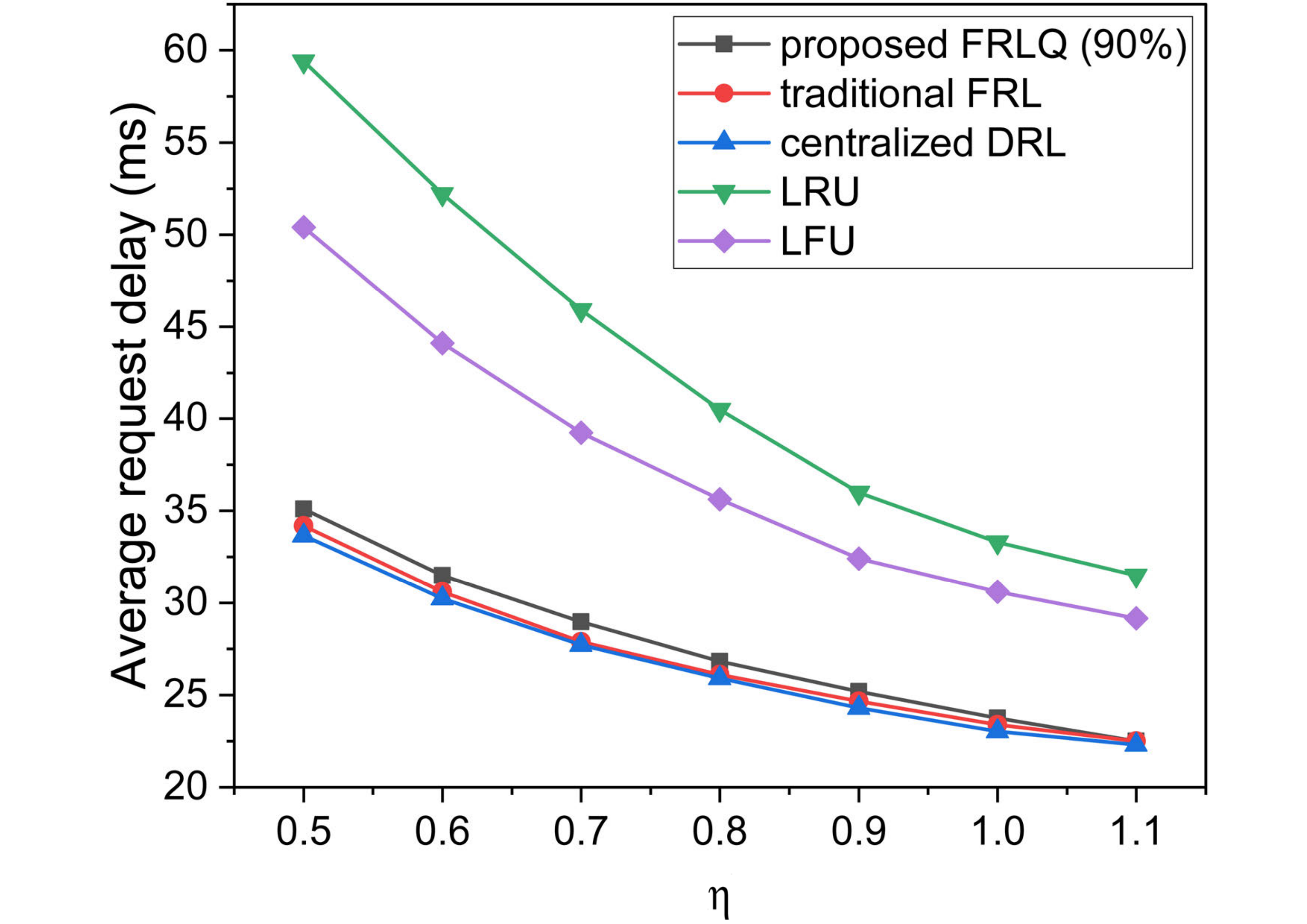}
  \caption{Average request delay versus the skewness factor for different caching policies.}\label{delay_alpha}
\end{figure}

\begin{figure}[t!]
  \centering
  \includegraphics[width=0.42\textwidth]{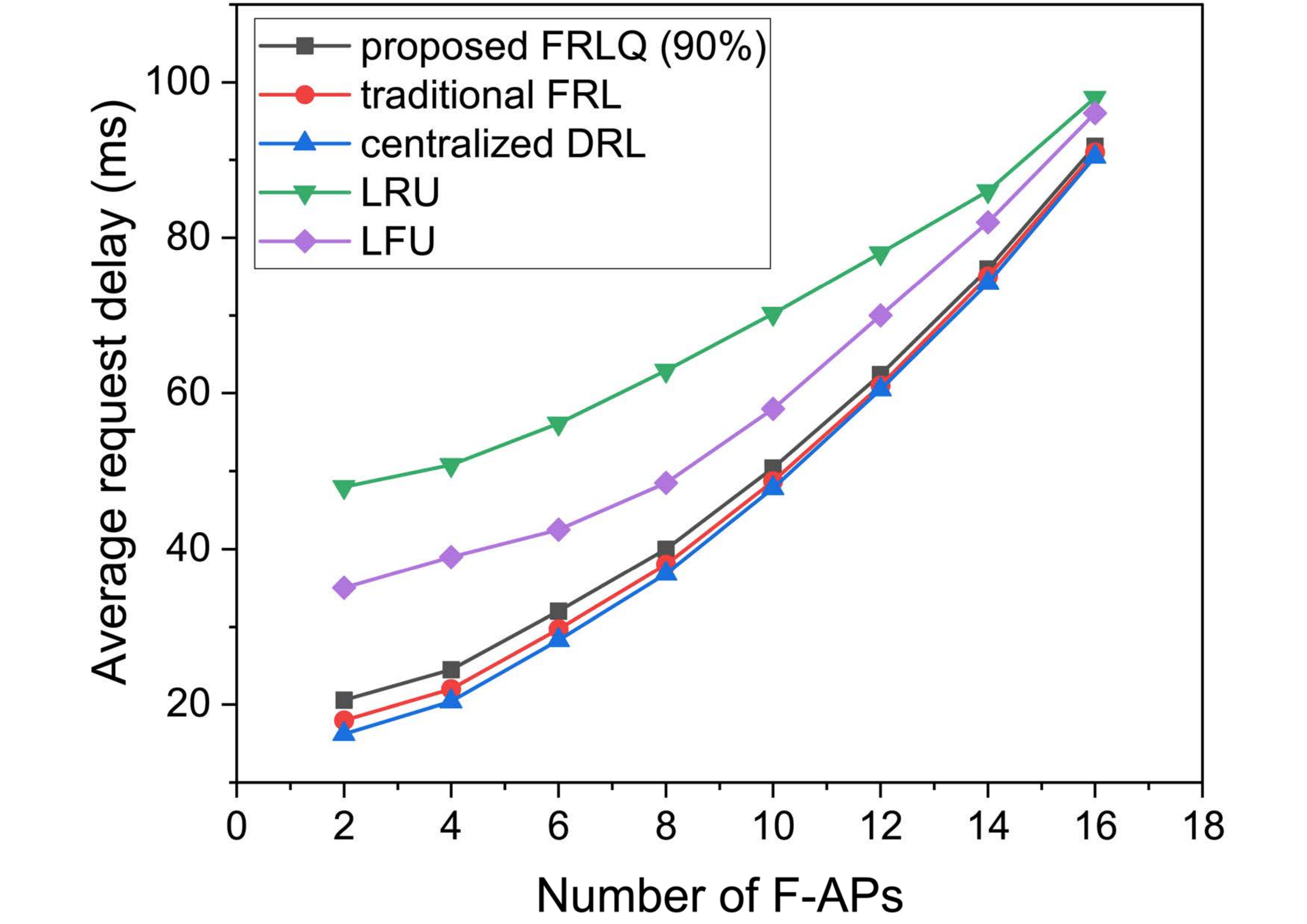}
  \caption{Average request delay versus number of F-APs for different caching policies.}\label{delay_number}
\end{figure}

\begin{appendices}
\section{Proof of Theorem 1}
To prove the convergence of problem (\ref{Theglobalmodelproblem}), i.e., the parameter ${\bar{\theta }}_{t}$ is convergent, it is necessary to prove that $\mathbb{E}{{\left\| {{{\bar{v }}}_{t+1}}-{{\theta }^{*}} \right\|}^2}\le (1-\mu {{\varphi }_{t}})\mathbb{E}{{\|{{\bar{\theta }}_{t}}-{{\theta }^{*}}\|}}^2 +{\varphi }_{t}^{2}H$, i.e., the distance between the iterative parameter ${\bar{\theta }}_{t}$ and the optimal parameter $\theta^*$ is bounded.
Note that ${{\bar{v}}_{t+1}}={{\bar{\theta }}_{t}}-{{\varphi }_{t}}{{g}_{t}}$. Then, we have:
\begin{equation}\label{function1}\begin{aligned}
  {{\left\| {{{\bar{v}}}_{t+1}}-{{\theta}^{*}} \right\|}^{2}}=&{{\left\| {{{\bar{\theta}}}_{t}}-{{\varphi }_{t}}{{g}_{t}}-{{\theta}^{*}}-{{\varphi }_{t}}{{{\bar{g}}}_{t}}+{{\varphi }_{t}}{{{\bar{g}}}_{t}} \right\|}^{2}} \\
  =& {{\left\| {{{\bar{\theta}}}_{t}}-{{\varphi }_{t}}{{{\bar{g}}}_{t}}-{{\theta}^{*}} \right\|}^{2}}+{\varphi }_{t}^{2}{{\left\| {{{\bar{g}}}_{t}}-{{g}_{t}} \right\|}^{2}}\\
  &+2{{\varphi }_{t}}\langle {{{\bar{\theta}}}_{t}}-{{\varphi }_{t}}{{{\bar{g}}}_{t}}-{{\theta}^{*}},-{{g}_{t}}+{{{\bar{g}}}_{t}}\rangle.\\
\end{aligned}
\end{equation}

For ${{\left\| {{{\bar{\theta}}}_{t}}-{{\varphi }_{t}}{{{\bar{g}}}_{t}}-{{\theta}^{*}} \right\|}^{2}}$ in (\ref{function1}), we can further establish:
\begin{equation}\label{function2}
  {{\left\| {{{\bar{\theta }}}_{t}}-{{\varphi }_{t}}{{{\bar{g}}}_{t}}-{{\theta }^{*}} \right\|}^{2}}={{\left\| {{{\bar{\theta }}}_{t}}-{{\theta }^{*}} \right\|}^{2}}-2{{\varphi }_{t}}\langle {{\bar{\theta }}_{t}}-{{\theta }^{*}},{{\bar{g}}_{t}}\rangle +{\varphi }_{t}^{2}{{\left\| {{{\bar{g}}}_{t}} \right\|}^{2}}.
\end{equation}
Considering that $f_n(\theta)$ is $\beta$-smooth in Assumption 1, we have:
\begin{equation}\label{function3}
  {{\left\| \nabla {{f}_{n}}(\theta _{t}^{n}) \right\|}^{2}}\le 2\beta({{f}_{n}}(\theta _{t}^{n}-f_{n}^{*}(\theta ))).
\end{equation}
Since the two-norm is convex, combined with (\ref{function3}), we have:
\begin{equation}\label{function4}
  \begin{aligned}
{\varphi }_{t}^{2}{{\left\| {{{\bar{g}}}_{t}} \right\|}^{2}} &\le {\varphi }_{t}^{2}\sum{{{p}_{n}}}{{\left\| \nabla {{f}_{n}}(\theta _{t}^{n}) \right\|}^{2}}\\
&\le 2\beta{\varphi }_{t}^{2}\sum{{{p}_{n}}({{f}_{n}}(\theta _{t}^{n}-f_{n}^{*}(\theta ))}. \\
\end{aligned}
\end{equation}
For $-2{{\varphi }_{t}}\langle {{{\bar{\theta }}}_{t}}-{{\theta }^{*}},{{{\bar{g}}}_{t}}\rangle$ in (41), we can further establish:
\begin{equation}\label{function5}
  \begin{aligned}
  -2{{\varphi }_{t}}\langle {{{\bar{\theta }}}_{t}}-{{\theta }^{*}},{{{\bar{g}}}_{t}}\rangle =&-2{{\varphi }_{t}}\sum{{{p}_{n}}\langle {{{\bar{\theta }}}_{t}}}-{{\theta }^{*}},\nabla {{f}_{n}}(\theta _{t}^{n})\rangle  \\
  =&-2{{\varphi }_{t}}\sum{{{p}_{n}}\langle {{{\bar{\theta }}}_{t}}}-\theta _{t}^{n},\nabla {{f}_{n}}(\theta _{t}^{n})\rangle \\
 &-2{{\varphi }_{t}}\sum{{{p}_{n}}\langle \theta _{t}^{n}-{{\theta }^{*}}},\nabla {{f}_{n}}(\theta _{t}^{n})\rangle.  \\
\end{aligned}
\end{equation}
According to the Cauchy-Schwarz inequality and the matrix inequality, we have:
\begin{equation}\label{function6}
  -2\langle {{\bar{\theta }}_{t}}-\theta _{t}^{n},\nabla {{f}_{n}}(\theta _{t}^{n})\rangle \le \frac{1}{{{\varphi }_{t}}}{{\left\| {{{\bar{\theta }}}_{t}}-\theta _{t}^{n} \right\|}^{2}}+{{\varphi }_{t}}{{\left\| \nabla {{f}_{n}}(\theta _{t}^{n}) \right\|}^{2}}.
\end{equation}
Considering that $f_n(\theta)$ is $\mu$-strongly convex in Assumption 1, we have:
\begin{equation}\label{function7}
  -\langle \theta _{t}^{n}-{{\theta }^{*}},\nabla {{f}_{n}}(\theta _{t}^{n})\rangle \le -({{f}_{n}}(\theta _{t}^{n})-{{f}_{n}}(\theta{{}^{*}}))-\frac{\mu }{2}{{\left\| \theta _{t}^{n}-{{\theta }^{*}} \right\|}^{2}}.
\end{equation}
Therefore, combined with (\ref{function4}), (\ref{function5}), (\ref{function6}) and (\ref{function7}), the formula (\ref{function2}) can be rewritten as:
\begin{equation}\label{function8}
\begin{aligned}
   &{\left\| {{{\bar{\theta }}}_{t}}-{{\varphi }_{t}}{{{\bar{g}}}_{t}}-{{\theta }^{*}} \right\|}^{2}\\
   &\le  {{\left\| {{{\bar{\theta }}}_{t}}-{{\theta }^{*}} \right\|}^{2}}
   +2\beta{\varphi }_{t}^{2}\sum{p_n({{f}_{n}}(\theta _{t}^{n}-f_{n}^{*}(\theta )))}\\
  &+{{\varphi }_{t}}\sum{{{p}_{n}}(\frac{1}{{{\varphi }_{t}}}}{{\left\| {{{\bar{\theta }}}_{n}}-\theta _{t}^{n} \right\|}^{2}}+{{\varphi }_{t}}{{\left\| \nabla {{f}_{n}}(\theta _{t}^{n}) \right\|}^{2}}) \\
  &-2{{\varphi }_{t}}\sum{{{p}_{n}}({{f}_{n}}(\theta _{t}^{n})-{{f}_{n}}({{\theta }^{*}})-\frac{\mu }{2}}{{\left\| \theta _{t}^{n}-{{\theta }^{*}} \right\|}^{2}}) \\
 &\le (1-\mu {{\varphi }_{t}}){{\left\| {{{\bar{\theta }}}_{t}}-{{\theta }^{*}} \right\|}^{2}}+{{\sum{{{p}_{n}}\left\| {{{\bar{\theta }}}_{t}}-\theta _{t}^{n} \right\|}}^{2}}\\
 &+4\beta{\varphi }_{t}^{2}\sum{{p}_{n}}({{f}_{n}}(\theta _{t}^{n})-f_{n}^{*}(\theta ))\\
 &-2{{\varphi }_{t}}\sum{{{p}_{n}}({{f}_{n}}(\theta _{t}^{n})-{{f}_{n}}({{\theta }^{*}}))}.
 \end{aligned}
\end{equation}

We set ${{\gamma }_{t}}=2{{\varphi }_{t}}(1-2\beta{{\varphi }_{t}})$. Since ${{\varphi }_{t}}\le \frac{1}{4\beta}$, ${{\varphi }_{t}}\le {{\gamma }_{t}}\le 2{{\varphi }_{t}}$ can be obtained. If the data is independent and identically distributed (IID), then $\Phi$ will gradually tend to zero as the number of iterations increases. If the data is non-IID, then $\Phi$ is not equal to zero. Then, we have:
  \begin{align}\label{function9}
  & 4\beta{\varphi }_{t}^{2}\sum{{{p}_{n}}({{f}_{n}}(\theta _{t}^{n})-f_{n}^{*}(\theta ))-2{{\varphi }_{t}}\sum{{{p}_{n}}({{f}_{n}}(\theta _{t}^{n})-{{f}_{n}}({{\theta }^{*}}}})) \notag \\
 & =-2{{\varphi }_{t}}(1-2\beta{{\varphi }_{t}})\sum{{{p}_{n}}}({{f}_{n}}(\theta _{t}^{n})-f_{n}^{*}(\theta )) \notag\\
 &+2{{\varphi }_{t}}\sum{{{p}_{n}}({{f}_{n}}({{\theta }^{*}})-f_{n}^{*}(\theta ))} \notag\\
 & =-{{\gamma }_{t}}\sum{{{p}_{n}}({{f}_{n}}(\theta _{t}^{n})-{{f}}(\theta^* ))+4\beta{\varphi }_{t}^{2}\Phi }.
\end{align}
For $\sum{{{p}_{n}}({{f}_{n}}(\theta _{t}^{n})-{{f}}(\theta^* ))}$ in (\ref{function9}), we can further establish:
\begin{equation}\label{function10}
  \begin{aligned}
  & \sum{{{p}_{n}}({{f}_{n}}(\theta _{t}^{n})-{{f}}(\theta^* ))}\\
 &= \sum{{{{p}_{n}}({{f}_{n}}(\theta _{t}^{n})-}}{{f}_{n}}({{{\bar{\theta }}}_{t}}))+\sum{{{p}_{n}}({{f}_{n}}({{{\bar{\theta }}}_{t}})-{{f}}(\theta^* ))} \\
 & \ge \sum{{{p}_{n}}\langle \nabla }{{f}_{n}}({{{\bar{\theta }}}_{t}}),\theta _{t}^{n}-{{{\bar{\theta }}}_{t}}\rangle +{{f}_{n}}({{{\bar{\theta }}}_{t}})-{{f}}(\theta^* ) \\
 & \ge -\sum{{{p}_{n}}\left[ {{\varphi }_{t}}\beta({{f}_{n}}({{{\bar{\theta }}}_{t}})-f_{n}^{*}(\theta )+\frac{1}{2{\varphi }_{t} }{{\left\| \theta _{t}^{n}-{{{\bar{\theta }}}_{t}} \right\|}^{2}} \right]}\\
 &+\left( f({{{\bar{\theta }}}_{t}})-{{f}}(\theta^* ) \right). \\
\end{aligned}
\end{equation}
In the last inequality of (\ref{function10}), we consider ${{\varphi }_{t}}\beta-1\le -\frac{3}{4}$ and $\sum{{{p}_{n}}({{f}_{n}}({{{\bar{\theta }}}_{t}}})-{{f}}(\theta^* )) = f({{\bar{\theta }}_{t}})-{{f}^{*}}({{\theta }}) \ge 0$. Combined with (\ref{function10}), (\ref{function9}) can be rewritten as:
\begin{equation}\label{function11}
  \begin{aligned}
  & 4\beta{\varphi }_{t}^{2}\sum{{{p}_{n}}({{f}_{n}}(\theta _{t}^{n}})-f_{n}^{*}(\theta ))-2{{\varphi }_{t}}\sum{{{p}_{n}}({{f}_{n}}(\theta _{t}^{n})-{{f}_{n}}({{\theta }^{*}}})) \\
 & \le {{\gamma }_{t}}\sum{{{p}_{n}}\left[ {{\varphi }_{t}}\beta({{f}_{n}}({{{\bar{\theta }}}_{t}})-f_{n}^{*}(\theta ))+\frac{1}{2{\varphi }_{t} }{{\left\| \theta _{t}^{n}-{{{\bar{\theta }}}_{t}} \right\|}^{2}} \right]}\\
 &-\gamma_t (f({{{\bar{\theta }}}_{t}})-{{f}}(\theta^* ))+4\beta{{\varphi }_{t}^2}\Phi  \\
 & ={{\gamma }_{t}}({{\varphi }_{t}}\beta-1)\sum{{{p}_{n}}({{f}_{n}}({{{\bar{\theta }}}_{t}}})-{{f}}(\theta^* ))+(4\beta{\varphi }_{t}^{2}\Phi+{{\gamma }_{t}}{{\varphi }_{t}}\beta)\\
 &+\frac{\gamma }{2{{\varphi }_{t}}}{{\sum{{{p}_{n}}\left\| \theta _{t}^{n}-{{{\bar{\theta }}}_{t}} \right\|}}^{2}}
  \le 6\beta{\varphi }_{t}^{2}\Phi +{{\sum{{{p}_{n}}\left\| \theta _{t}^{n}-{{{\bar{\theta }}}_{t}} \right\|}}^{2}}.
\end{aligned}
\end{equation}
Combined with (\ref{function8}) and (\ref{function11}), (\ref{function2}) can be rewritten as:
  \begin{multline}\label{function12}
  {{\left\| {{{\bar{\theta }}}_{t}}-{{\varphi }_{t}}{{{\bar{g}}}_{t}}-{{\theta }^{*}} \right\|}^{2}} \le (1-\mu {{\varphi }_{t}}){{\left\| {{{\bar{\theta }}}_{t}}-{{\theta }^{*}} \right\|}^{2}}\\
  +6\beta{\varphi }_{t}^{2}\Phi +2\sum{{{p}_{n}}{{\left\| \theta _{t}^{n}-{{{\bar{\theta }}}_{t}} \right\|}^{2}}}.
   \end{multline}

According to Assumption 2, the expected squared norm $G$ of the stochastic gradient is bounded, namely:
\begin{equation}\label{function14}
  \begin{aligned}
   &{{\sum{{{p}_{n}}\left\| {{{\bar{\theta }}}_{t}}-\theta _{t}^{n} \right\|}}^{2}}\\&={{\sum{{{p}_{n}}\left\| \theta _{t}^{n}-{{{\bar{\theta }}}_{{{t}_{0}}}}-({{{\bar{\theta }}}_{t}}-{{{\bar{\theta }}}_{{{t}_{0}}}}) \right\|}}^{2}} \\
  &\le {{\sum{{{p}_{n}}\left\| \theta _{t}^{n}-{{{\bar{\theta }}}_{{{t}_{0}}}} \right\|}}^{2}} \\
  &\le \sum{{{p}_{n}}\sum\nolimits_{t={{t}_{0}}}^{t-1}{(X-1){\varphi }_{t}^{2}}}\left\| \nabla {{f}_{n}}(\theta _{t}^{n},\xi _{t}^{n}) \right\|^2 \\
  &\le 4{\varphi }_{t}^{2}{{(X-1)}^{2}}{{G}^{2}}.\\
  \end{aligned}
\end{equation}
According to Assumption 3, the gradient variance $\sigma _{n}^{2}$ is bounded, namely:
\begin{equation}\label{function13}
  \begin{aligned}
   {{\left\| {{g}_{t}}-{{{\bar{g}}}_{t}} \right\|}^{2}}&=\left\| \sum{{{p}_{n}}(\nabla {{f}_{n}}({{\theta }^n_{t}},\xi _{t}^{n})-\nabla {{f}_{n}}(\theta _{t}^{n})}) \right\|^2 \\
 & \le \sum{p_{n}^{2}}\sigma _{n}^{2}. \\
\end{aligned}
\end{equation}

For $2{{\varphi }_{t}}\langle {{{\bar{\theta}}}_{t}}-{{\varphi }_{t}}{{{\bar{g}}}_{t}}-{{\theta}^{*}},-{{g}_{t}}+{{{\bar{g}}}_{t}}\rangle$ in (\ref{function1}), since $\mathbb{E}({{{\bar{g}}}_{t}}-{{g}_{t}})=0$, we have:
\begin{equation}\label{function15}
  \mathbb{E}(2{{\varphi }_{t}}\langle {{{\bar{\theta}}}_{t}}-{{\varphi }_{t}}{{{\bar{g}}}_{t}}-{{\theta}^{*}},-{{g}_{t}}+{{{\bar{g}}}_{t}}\rangle)=0.
\end{equation}

Therefore, by combining (\ref{function1}), (\ref{function12}), (\ref{function13}), (\ref{function14}) and (\ref{function15}), the following relationship can be readily established:
\begin{multline}\label{function16}
  \mathbb{E}{{\left\| {{{\bar{v }}}_{t+1}}-{{\theta }^{*}} \right\|}^2}\le (1-\mu {{\varphi }_{t}})\mathbb{E}{{\|{{\bar{\theta }}_{t}}-{{\theta }^{*}}\|}}^2+{\varphi }_{t}^{2}H.
\end{multline}

This completes the proof.

\section{Proof of Theorem 2}
We prove (\ref{decay}) by mathematical induction. The definition of $\rho$ guarantees that (\ref{decay}) holds for $t\ge1 $. Then, we have:
\begin{equation}\label{zhujianjianxiao}
\begin{aligned}
  {{\Delta }_{t+1}}&\le (1-{{\varphi }_{t}}\mu ){{\Delta }_{t}}+{{\varphi }_{t}^2}H \\
 & \le (1-\frac{b\mu }{t+a})\frac{\rho }{t+a}+\frac{{{b}^{2}}H}{{{(t+a)}^{2}}} \\
 & =\frac{t+a-1}{{{(t+a)}^{2}}}\rho +\left[\frac{{{b}^{2}}H}{{{(t+a)}^{2}}}-\frac{b\mu -1}{{{(t+a)}^{2}}}\rho\right] \\
 & \le \frac{\rho }{t+a+1}. \\
\end{aligned}
\end{equation}
It indicates that ${\Delta }_{t}$ gradually decreases with time.

This completes the proof.
 \end{appendices}

\bibliographystyle{IEEEtran}      
\bibliography{manuscript}

\end{document}